\documentclass[1p,final]{elsarticle}
\usepackage{amsfonts,color,morefloats,pslatex}
\usepackage{amssymb,amsthm, amsmath,latexsym}
\usepackage[pagewise]{lineno}

\newtheorem{theorem}{Theorem}
\newtheorem{lemma}[theorem]{Lemma}

\newtheorem{corollary}[theorem]{Corollary}

\newtheorem{conj}{Conjecture}

\newcommand{\gf}{{\mathrm{GF}}}

\newcommand{\AG}{{\mathrm{AG}}}

\newcommand{\GA}{{\mathrm{GA}}}

\newcommand{\VS}{{\mathrm{VS}}}

\newcommand{\Stab}{{\mathrm{Stab}}}

\newcommand{\cP}{{\mathcal{P}}}
\newcommand{\cB}{{\mathcal{B}}}

\newcommand{\bD}{{\mathbb{D}}}

\begin{document}

\begin{frontmatter}



\title{Combinatorial $t$-designs from quadratic functions
\tnotetext[fn1]{The research of C. Xiang was supported by the National Natural Science Foundation of China (No.11701187) and the PhD Start-up Fund of the Natural Science Foundation of Guangdong Province of China (No.2017A030310522). The research of X. Ling was supported by National Natural Science Foundation of China (Grant No.
11871058).}
}

\author[cxiang]{Can Xiang}
\ead{cxiangcxiang@hotmail.com}
\author[xling]{Xin Ling}
\ead{xinlingcwnu@163.com}
\author[qwang]{Qi Wang}
\ead{wangqi@sustech.edu.cn}

\address[cxiang]{College of Mathematics and Informatics, South China Agricultural University, Guangzhou, Guangdong 510642, China}
\address[cmt]{School of Mathematics and Information, China West Normal University, Nanchong, Sichuan,  637002, China}
\address[qwang]{Department of Computer Science and Engineering, Southern University of Science and Technology, Shenzhen, Guangdong 518055,  China}

\begin{abstract}
Combinatorial $t$-designs have been an interesting topic in
combinatorics for decades. It was recently reported that the image sets of a fixed size of certain special polynomials may constitute a $t$-design. Till now only a small amount of
work on constructing $t$-designs from special polynomials has been done, and it is in general hard to determine their parameters. In this paper, we investigate this idea further by using quadratic functions over finite fields, thereby obtain infinite families of $2$-designs, and explicitly determine their parameters. The obtained designs cover some earlier $2$-designs as special cases. Furthermore, we confirmed Conjecture $3$ in Ding and Tang (arXiv: 1903.07375, 2019).

\end{abstract}

\begin{keyword}
Polynomial \sep quadratic functions \sep $t$-design.

\MSC  51E21 \sep 05B05 \sep 12E10

\end{keyword}

\end{frontmatter}

\section{Introduction}

Let $k$, $t$ and $v$ are positive integers with $1\leq t \leq k \leq v$. Let $\cP$ be a set of $v \ge 1$ mqi{$v$} elements, and let $\cB$ be a set of $k$-subsets of $\cP$. The pair $\bD = (\cP, \cB)$ is called an \emph{incidence structure}, and is said to be a $t$-$(v, k, \lambda)$ {\em design\index{design}} if every $t$-subset of $\cP$ is contained in exactly $\lambda$ elements of $\cB$.
The elements of $\cP$ are called points, and those of $\cB$ are referred to as blocks.
We usually use $b$ to denote the number of blocks in $\cB$.  A $t$-design is called {\em simple\index{simple}} if $\cB$ has no repeated blocks. A $t$-design is called symmetric if $v = b$ and trivial if $k = t$ or $k = v$. In this paper, we study only simple $t$-designs with $v > k > t$. When  $t \geq 2$ and $\lambda=1$, a $t$-design is called a {\em Steiner system\index{Steiner system}} and traditionally denoted by $S(t,k, v)$.

Combinatorial t-designs have found important applications in coding theory, cryptography, communications and statistics. There are two major methods of constructing $t$-designs.
One is to construct them from error-correcting codes, and a number of constructions have been presented (see for example~\cite{AK92,Ding2015,KP95,KP03,TDX,TD98,TD07}).
Recently,  Ding and Li~\cite{DL17} obtained infinite families of $2$-designs and $3$-designs from some special codes and their duals.
Afterwards, some $t$-designs were further constructed from some other special codes over finite fields (see \cite{Ding20181,Ding20182,du1,du2,du3}).
The other method is via group actions of certain permutation groups which are $t$-transitive or $t$-homogeneous on a certain point set.
The following theorem, which shows that the incidence structure $(\mathcal P, \cB)$ is always a $2$-design by $2$-homogeneous group actions (see \cite[Proposition 4.6]{BJL}), was recently employed by Liu and Ding \cite{LD} to construct a number of infinite families of $2$-designs.

\begin{theorem}~\cite[Proposition 4.6]{BJL}\label{the-con}
Let $\mathcal P$ be a set of $v \geq 1$ elements, and let $G$ be a permutation group on $\mathcal P$. Let $B \subseteq \mathcal P$
be a $k$-subset with $k\geq 2$. Define
$$\cB=G(B) = \{g(B) : g \in G\},$$
where $g(B) = \{g(b) : b \in B\}$. If $G$ is $2$-homogeneous on $\mathcal P$ and $|B| \geq 2$, then $(\mathcal P, \cB)$ is a
$2-(v,k,\lambda)$ design with
$$
k=|B|,
\lambda=
b \left(
\begin{aligned}
&k  \\
&2  \\
\end{aligned}
\right)
/
\left(
\begin{aligned}
&v\\
&2\\
\end{aligned}
\right),
$$
where $b=\frac{|G|}{|G_{B}|}$ and $G_{B}=\{g\in G: g(B)=B\}$ is the stabilizer of $B$ under the group $G$.
\end{theorem}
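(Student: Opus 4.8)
The plan is to verify the defining property of a $2$-design directly from the transitivity of $G$ on $2$-subsets, and then to pin down $\lambda$ by a double-counting argument together with the orbit-stabilizer theorem.

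First I would record the two group actions in play. Since $G$ permutes $\mathcal{P}$, it also acts on the collection $\binom{\mathcal{P}}{2}$ of $2$-subsets via $g\{x,y\} = \{g(x), g(y)\}$, and saying that $G$ is $2$-homogeneous means precisely that this action is transitive. Simultaneously, $G$ permutes the block set $\cB = G(B)$, since $g(g'(B)) = (gg')(B) \in \cB$ for all $g, g' \in G$; and because each $g$ is a bijection on $\mathcal{P}$, it sends a block containing a given pair $T$ to a block containing the pair $g(T)$. Note also that $\cB$ is a set, so the resulting design is automatically simple.

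The main step is to show that $\lambda_T := |\{C \in \cB : T \subseteq C\}|$ does not depend on the chosen $2$-subset $T$. Fix two $2$-subsets $T$ and $T'$. By $2$-homogeneity there is $g \in G$ with $g(T) = T'$. The map $C \mapsto g(C)$ is then a bijection from $\{C \in \cB : T \subseteq C\}$ onto $\{C \in \cB : T' \subseteq C\}$, with inverse induced by $g^{-1}$, whence $\lambda_T = \lambda_{T'}$. This common value $\lambda$ makes $(\mathcal{P}, \cB)$ a $2$-$(v, k, \lambda)$ design with $k = |B|$, where the hypothesis $k \geq 2$ guarantees that blocks genuinely contain $2$-subsets.

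It remains to evaluate $b$ and $\lambda$. For the former, the orbit-stabilizer theorem applied to the action of $G$ on $\binom{\mathcal{P}}{k}$ identifies the number of blocks as $b = |\cB| = |G(B)| = |G|/|G_{B}|$. For the latter, I would count the set of incident pairs $\{(T, C) : T \in \binom{\mathcal{P}}{2},\ C \in \cB,\ T \subseteq C\}$ in two ways. Summing over blocks, each of the $b$ blocks has size $k$ and hence contains exactly $\binom{k}{2}$ two-subsets, giving $b\binom{k}{2}$; summing over $2$-subsets, each of the $\binom{v}{2}$ pairs lies in exactly $\lambda$ blocks, giving $\lambda\binom{v}{2}$. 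Equating the two counts yields $\lambda = b\binom{k}{2}/\binom{v}{2}$, as claimed. I expect the only delicate point to be the constancy of $\lambda$; everything else is a routine orbit-stabilizer computation and a standard double count, so the crux of the argument is the bijection transporting the blocks through $T$ to the blocks through $T'$.
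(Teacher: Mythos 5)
Your proof is correct. Note that the paper does not actually prove this statement---it is quoted verbatim from the design-theory literature (\cite[Proposition 4.6]{BJL}) and used as a black box---so there is no in-paper argument to compare against; your argument (2-homogeneity transports the blocks through one pair bijectively onto the blocks through any other, orbit-stabilizer gives $b=|G|/|G_B|$, and a double count of incident pairs $(T,C)$ gives $\lambda = b\binom{k}{2}/\binom{v}{2}$) is precisely the standard textbook proof of this proposition, with all the key points (invariance of $\cB$ under $G$, blocks having size exactly $k$ since each $g$ is a bijection) handled correctly.
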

Very recently, Ding and Tang \cite{DT192} presented two constructions of $t$-designs from special polynomials over finite fields, and obtained $2$-designs and $3$-designs with interesting parameters from their defined d-polynomials. However, it is in general hard to determine the parameters of the underlying $t$-designs by their constructions. Motivated by this fact, we obtain infinite families of $2$-designs by using quadratic functions over finite fields and determine their parameters explicitly. For other constructions of $t$-designs, see \cite{tang19,BJL,Shri,Mac,Reid} and the references therein.


The rest of this paper is arranged as follows. Section \ref{sec-pre} introduces some basic notations and results of
projective planes and affine curves which will be needed in subsequent sections. Based on a generic construction in \cite{DT192}, Section \ref{sec-construct} gives infinite families of $2$-designs with new parameters by quadratic functions over finite fields and the proofs of their parameters
are given in Section \ref{sec-mainproof}. Section \ref{sec-summary} summarizes this paper.

\section{Preliminaries}\label{sec-pre}

In this section, we state some notations and basic facts on affine curves and projective planes, which will be used in the following sections.

\subsection{Some notations fixed throughout this paper}

For convenience, we adopt the following notations unless otherwise stated.
\begin{itemize}
  \item $p$ is a prime number.
 \item $\gcd(h_1,h_2)$  denotes the greatest common divisor of the two positive integers $h_1$ and $h_2$.
 \item $q = p^m$, where $m$ and $l$ are positive integers, and $\gcd(m, l) = 1$.
 \item  $\gf(q)$ denotes the finite field with $q$ elements and  $\gf(q)^{*} = \gf(q)\setminus \{0\}$.

\item $\textup{QR}$ and $\textup{N\textup{QR}}$ denote the set of all nonzero quadratic residues and  quadratic non-residues in $\gf(q)$, respectively.
\end{itemize}

\subsection{Projective planes and affine curves}

Let $\mathrm{GF}(q^{\infty})$ be the algebraic closure of $\mathrm{GF}(q)$. The projective plane $\mathbb P^2(
\mathrm{GF}(q))$ is defined as
$$\mathbb P^2(\mathrm{GF}(q)):=\left (\mathrm{GF}(q)^3\setminus \{(0,0,0)\} \right )/\thicksim ,$$
where $(X_0, Y_0, Z_0) \thicksim  (X_1, Y_1, Z_1)$ if and only if there is some $\lambda \in  \gf(q)^*$
with $X_1 =\lambda X_0,Y_1 =\lambda Y_0$ and $Z_1 =\lambda Z_0$. To remind ourselves that points of $\mathbb P^2(\mathrm{GF}(q))$ are equivalence classes,
we write $(X: Y: Z)$ for the equivalence class of $(X, Y, Z)$ in $\mathbb P^2(\mathrm{GF}(q))$ .
Let $f(x,y) \in  \mathrm{GF}(q)[x,y]$ be a polynomial of
degree $d$ over $\mathrm{GF}(q) $. Then the affine curve $C_f$ associated to $f$ is defined by
$$C_f=\{(x,y)\in  \mathrm{GF}(q^{\infty})^2: f(x,y)=0 \}.$$
The projective closure of
the affine curve $C_f$ is
$$\hat{C}_f=\{(X:Y:Z)\in \mathbb P^2(\mathrm{GF}(q^{\infty})): F(X,Y,Z)=0 \},$$
where $F(X,Y,Z)=Z^d \cdot  f \left(\frac{X}{Z},\frac{Y}{Z} \right )$
is the homogenization  of $f$. For polynomial $F$, $F_X$, $F_Y$ and $F_Z$ denote the formal partial derivatives  of $F$ with respect to $X$, $Y$ and $Z$, respectively.
A singular point of $\hat{C}_f$ is a point $(X_0:Y_0:Z_0)\in \mathbb P^2(\mathrm{GF}(q^{\infty}))$ such that
$$
\left\{ \begin{array}{l}
F(X_0,Y_0,Z_0)=0 \\
F_X(X_0,Y_0,Z_0)=0 \\
F_Y(X_0,Y_0,Z_0)=0       \\
F_Z(X_0,Y_0,Z_0)=0.
\end{array}
\right.
$$
The projective curve $\hat{C}_f$ is nonsingular if it has no singular points.
A nonsingular projective plane curve is  irreducible.

Let $\mathfrak X$ be a curve over $\gf(q)$, whose defining equations have coefficients in $\mathrm{GF}(q)$.
Then the points on $\mathfrak  X$ with all their coordinates in $\mathrm{GF}(q)$ are called $\mathrm{GF}(q)$-rational points.
The set of all $\mathrm{GF}(q)$-rational points of $\mathfrak  X$ is denoted by $\mathfrak  X \left ( \mathrm{GF}(q) \right )$.

The following theorem is the fundamental result in the area of algebraic curves.
\begin{theorem}[Hasse-Weil Theorem]\label{thm:Hasse-Weil}
Let $\mathfrak  X$ be a nonsingular projective curve of genus $g$ over the field $\mathrm{GF}(q)$ and set $N = | \mathfrak  X \left ( \mathrm{GF}(q) \right )|$. Then
\begin{align}\label{eq:gg}
| N-(q+1)| \leq  2g \sqrt{q}.
\end{align}

\end{theorem}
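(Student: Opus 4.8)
The plan is to reduce the bound to the Riemann Hypothesis for curves through the zeta function of $\mathfrak{X}$. First I would set $N_n = |\mathfrak{X}(\gf(q^n))|$ for every $n \ge 1$ and introduce
$$Z(T) = \exp\!\left( \sum_{n=1}^{\infty} N_n \frac{T^n}{n} \right).$$
Counting effective divisors of each degree and invoking the Riemann--Roch theorem on $\mathfrak{X}$ shows that $Z(T)$ is rational of the form
$$Z(T) = \frac{P(T)}{(1-T)(1-qT)}, \qquad P(T) = \prod_{i=1}^{2g}(1-\alpha_i T) \in \mathbb{Z}[T], \quad P(0)=1,$$
while Serre duality yields the functional equation $Z\!\left(\tfrac{1}{qT}\right) = q^{1-g}T^{2-2g} Z(T)$, which fixes $\deg P = 2g$ and pairs the reciprocal roots as $\alpha_i \leftrightarrow q/\alpha_i$. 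Comparing the Taylor coefficients of the logarithmic derivative of $Z$ then gives the trace formula $N_n = q^n + 1 - \sum_{i=1}^{2g} \alpha_i^n$; in particular $N = N_1 = q+1 - \sum_{i=1}^{2g}\alpha_i$. Granting the Riemann Hypothesis $|\alpha_i| = \sqrt{q}$, the theorem is then immediate, since
$$|N - (q+1)| = \Big| \sum_{i=1}^{2g}\alpha_i \Big| \le \sum_{i=1}^{2g} |\alpha_i| = 2g\sqrt{q}.$$

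The whole difficulty is therefore concentrated in proving $|\alpha_i| = \sqrt{q}$, and this is the step I expect to be the main obstacle. I would establish it by Weil's geometric argument on the surface $S = \mathfrak{X} \times \mathfrak{X}$. The $\gf(q)$-rational points of $\mathfrak{X}$ are exactly the fixed points of the $q$-power Frobenius $\phi$, so a Lefschetz-type intersection count identifies $N$ with $\Delta \cdot \Gamma_\phi$, the intersection number of the diagonal with the graph of $\phi$. Writing $f_1 = \mathrm{pt}\times\mathfrak{X}$ and $f_2 = \mathfrak{X}\times\mathrm{pt}$ for the two ruling classes, with $f_1^2=f_2^2=0$ and $f_1\cdot f_2 = 1$, the Hodge index theorem on $S$ supplies the fundamental inequality $D^2 \le 2\,(D\cdot f_1)(D\cdot f_2)$ for every correspondence $D$: indeed $D - (D\cdot f_2)f_1 - (D\cdot f_1)f_2$ is orthogonal to the hyperbolic plane $\langle f_1, f_2\rangle$, whose orthogonal complement carries a negative semi-definite form, so its self-intersection is nonpositive. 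Applying this inequality to suitable integer combinations of $\Delta$, $\Gamma_\phi$ and the fibers converts the nonnegativity of self-intersection numbers into the bound $|\alpha_i| \le \sqrt{q}$ for every eigenvalue; equivalently, this positivity is the statement that the Rosati involution on the Jacobian of $\mathfrak{X}$ is positive definite, which together with $\phi^\dagger \phi = \deg\phi = q$ controls the eigenvalues of $\phi$. The genuinely hard content is exactly this positivity input, since it is the place where the Riemann Hypothesis enters; the functional equation pairing $\alpha_i$ with $q/\alpha_i$ then upgrades $|\alpha_i|\le\sqrt{q}$ to the equality $|\alpha_i| = \sqrt{q}$.

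As a self-contained fallback I would keep Bombieri's elementary version of Stepanov's method in reserve. There one constructs auxiliary functions in the function field $\gf(q)(\mathfrak{X})$ that vanish to high order at the rational points, bounds $N_n \le q^n + c\,g\sqrt{q^n}$ directly from a dimension count, and then removes the loss in the constant $c$ by applying the estimate over the whole tower of extensions $\gf(q^n)$ and extracting $n$-th roots in the trace formula $N_n = q^n + 1 - \sum_i \alpha_i^n$. Either route funnels into the same equality $|\alpha_i| = \sqrt{q}$, after which the stated bound follows with no further computation.
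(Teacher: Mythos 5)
The paper offers no proof of this statement at all: the Hasse--Weil bound is quoted as a classical black-box result (with the added remark that for non-square $q$ one may replace the right-hand side by $g\lfloor 2\sqrt{q}\rfloor$), so there is no internal argument to compare yours against. Your outline is a faithful and accurate roadmap of the standard proofs --- the zeta function $Z(T)$, rationality via Riemann--Roch, the functional equation $Z(1/(qT))=q^{1-g}T^{2-2g}Z(T)$, the trace formula $N=q+1-\sum_i\alpha_i$, and Weil's positivity argument on $\mathfrak{X}\times\mathfrak{X}$, with Bombieri--Stepanov correctly described as an elementary alternative. But, as you concede, it is a plan rather than a proof: Riemann--Roch, Serre duality, the Hodge index theorem, and the identification $N=\Delta\cdot\Gamma_\phi$ are all invoked without proof, and the decisive step (converting the Castelnuovo--Severi inequality into $|\alpha_i|\le\sqrt{q}$) is only gestured at, so the hard content of the theorem is never actually discharged. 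One concrete structural simplification you missed: once you have the inequality $D^2\le 2(D\cdot f_1)(D\cdot f_2)$, the eigenvalues $\alpha_i$ are unnecessary for the stated bound. The form $d(D)=D^2-2(D\cdot f_1)(D\cdot f_2)$ is negative semi-definite, so Cauchy--Schwarz for its associated bilinear form $\sigma$, applied to the pair $(\Delta,\Gamma_\phi)$ with the standard intersection numbers $\Delta^2=2-2g$, $\Gamma_\phi^2=q(2-2g)$, $\Delta\cdot\Gamma_\phi=N$, $(\Delta\cdot f_1)(\Delta\cdot f_2)=1$ and $(\Gamma_\phi\cdot f_1)(\Gamma_\phi\cdot f_2)=q$, yields $\sigma(\Delta,\Gamma_\phi)=N-(q+1)$, $d(\Delta)=-2g$, $d(\Gamma_\phi)=-2gq$, hence
\begin{align*}
|N-(q+1)|=|\sigma(\Delta,\Gamma_\phi)|\le\sqrt{d(\Delta)\,d(\Gamma_\phi)}=2g\sqrt{q}
\end{align*}
directly, bypassing the zeta function, the functional equation, and the Rosati-involution machinery entirely; the full Riemann Hypothesis $|\alpha_i|=\sqrt{q}$ is stronger than what this theorem needs.
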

If $q$ is not a perfect square, we can  replace the right-hand side of the inequality (\ref{eq:gg}) in Hasse-Weil Theorem  with $g \lfloor 2 \sqrt{q} \rfloor$.

If $\hat{C}_f$ is a nonsingular projective plane curve corresponding to the polynomial $f(x,y)\in \mathrm{GF}(q)[x,y]$ of degree $d$,
then the genus $g$ of $\hat{C}_f$  is given by the Pl\"ucker formula
\begin{align}\label{eq:genus}
g=\frac{(d-1)(d-2)}{2}.
\end{align}

\section{$t$-designs from quadratic functions over $\gf(q)$}\label{sec-construct}

Let $f$ be a polynomial over $\gf(q)$, which is always viewed
as a function from $\gf(q)$ to $\gf(q)$. For each $(b, c) \in \gf(q)^2$,
define
\begin{eqnarray}
B_{(f,b,c)}=\{f(x)+bx+c:  x \in \gf(q)\}.
\end{eqnarray}
Let $k$ be an integer with $2 \leq k \leq q$. Define
\begin{eqnarray}
  \cB_{(f,k)}=\{B_{(f,b,c)}: |B_{(f,b,c)}|=k, \textrm{ and } b, \ c \in \gf(q)\}.
\end{eqnarray}
The incidence structure $\bD(f, k):=(\gf(q), \cB_{(f,k)})$ may be a $t$-$(q, k, \lambda)$ design for some $\lambda$,
where $\gf(q)$ is the point set, and the incidence relation is given by the set membership. In such a
case, we say that the polynomial $f$ supports a $t$-$(q, k, \lambda)$ design. This construction of $t$-designs with polynomials over finite fields was documented recently in \cite{DT192}.

We define the \emph{value spectrum} of a polynomial over $\gf(q)$ to be the multiset
\begin{eqnarray*}
\VS(f)=\{\{ |B_{(f,b,c)}|: (b,c) \in \gf(q)^2 \}\}.
\end{eqnarray*}
To determine the parameters of $t$-designs supported by a polynomial $f$, we need to
know its value spectrum.

This construction is generic in the sense that $t$-designs could be produced by properly
selecting the polynomial $f$ over GF(q).
Based on this construction, only a small number of $t$-designs have been constructed.
One of the main reasons is that the value spectrum of a polynomial is hard
to determine in general. In this paper, we consider constructing $t$-designs from the quadratic function
\begin{eqnarray*}\label{eq-f}
f(x)=x^{p^l+1}
\end{eqnarray*}
over $\gf(q)$ and determine their parameters.

The following two theorems are the main results of this paper, whose proofs will be postponed to present in Section 4.

\begin{theorem}\label{thm-main1}
Let $p=2$ ,$\ell, m$ be two positive integers with $m\ge 3$, $\ell < \frac{m}{4}-1$ and $\gcd(\ell, m)=1$. Let $f(x)=x^{2^{\ell}+1}$. Then the incidence structure $\bD(f(x), k):=(\gf(q), \cB_{(f(x),k)})$ is a $2$-$(q, k, k(k-1))$ design, where $k=\frac{2q+(-1)^m}{3}$.
\end{theorem}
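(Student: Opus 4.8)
The plan is to apply Theorem~\ref{the-con} by exhibiting a permutation group $G$ acting $2$-homogeneously on $\gf(q)$ under which the block set $\cB_{(f,k)}$ is invariant, and then to identify a single representative block $B$ and compute its size $k$ together with the number of blocks $b$. The natural group to use is the affine group $\AGL(1,q)$ acting by $x\mapsto \alpha x+\beta$ with $\alpha\in\gf(q)^*$, $\beta\in\gf(q)$; since $q=2^m$ is even, every nonzero element is a square, so $\AGL(1,q)$ is in fact $2$-transitive (hence $2$-homogeneous) on $\gf(q)$. First I would verify that the collection $\cB_{(f,k)}$ is closed under this action: for $f(x)=x^{2^\ell+1}$ one checks that affine substitutions and affine post-compositions send a set $B_{(f,b,c)}$ to another set of the same form $B_{(f,b',c')}$, using the additive (in fact $\gf(2^{\ell})$-linear) behaviour of $x\mapsto x^{2^\ell+1}$ on the relevant cross terms. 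This is the step where the quadratic structure of $f$ is essential, because for a general $f$ the image sets would not transform into one another.

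Once $2$-homogeneity and block-invariance are in place, Theorem~\ref{the-con} guarantees that $\bD(f,k)$ is a $2$-$(q,k,\lambda)$ design with
\[
\lambda=b\binom{k}{2}\Big/\binom{q}{2}.
\]
To match the claimed parameters $k=\frac{2q+(-1)^m}{3}$ and $\lambda=k(k-1)$, I would next compute the value spectrum $\VS(f)$ — that is, determine $|B_{(f,b,c)}|$ as $(b,c)$ ranges over $\gf(q)^2$ — and in particular count exactly how many pairs $(b,c)$ yield a block of the target size $k$; this count is precisely $b$. The size $|B_{(f,b,c)}|$ equals the number of distinct values of $x^{2^\ell+1}+bx$, which is governed by the fibre sizes of this map, i.e. by the number of solutions of $x^{2^\ell+1}+bx = (x+u)^{2^\ell+1}+b(x+u)$ for $u\in\gf(q)^*$. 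Reducing this equation should produce a linearized (additive) polynomial in $u$ whose kernel dimension controls the fibre sizes, and hence the image size, in terms of $\gcd(\ell,m)=1$.

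The main obstacle, and the heart of the proof, will be the exact enumeration needed to compute $k$ and $b$ and to show $\lambda=k(k-1)$ rather than merely establishing that \emph{some} $2$-design arises. Here I expect the affine-curve and Hasse--Weil machinery of Section~\ref{sec-pre} to enter: the condition $|B_{(f,b,c)}|=k$ translates into counting $\gf(q)$-rational points on an associated plane curve $C_f$ (coming from the collision equation above), whose genus is given by the Pl\"ucker formula~\eqref{eq:genus}, so that Theorem~\ref{thm:Hasse-Weil} bounds the point count and the hypotheses $m\ge 3$ and $\ell<\frac{m}{4}-1$ are exactly what force this bound to pin the count down to the single value $k=\frac{2q+(-1)^m}{3}$. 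Concretely, I would (i) derive the collision/curve equation, (ii) verify nonsingularity so the Pl\"ucker genus formula applies, (iii) apply Hasse--Weil to bound $N$, (iv) use the arithmetic inequalities on $\ell,m$ to conclude the image size is constant and equal to $k$ and to count the corresponding $(b,c)$, yielding $b$, and finally (v) substitute into $\lambda=b\binom{k}{2}/\binom{q}{2}$ and simplify, using $k=\frac{2q+(-1)^m}{3}$, to obtain $\lambda=k(k-1)$.
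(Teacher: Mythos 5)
Your framework is the same as the paper's: act with the affine group $\{x\mapsto ux+v : u\in\gf(q)^*, v\in\gf(q)\}$, which is $2$-transitive on $\gf(2^m)$, show that the size-$k$ blocks form a single orbit of a base block, and feed $k$ and $b$ into Theorem~\ref{the-con}. The closure step you describe is indeed carried out in the paper (Lemma~\ref{lem-dengzhi}: the substitution $x\mapsto bx$ shows every $B_{(f,b,c)}$ with $b\neq 0$ equals $b^{2^\ell+1}B_\ell+c$, where $B_\ell=\{x^{2^\ell+1}+x : x\in\gf(q)\}$ as in (\ref{eqn-Je11})). But your plan for extracting the parameters has a genuine gap: Hasse--Weil cannot ``pin the count down to the single value $k$''. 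Theorem~\ref{thm:Hasse-Weil} only confines a point count to an interval of width $4g\sqrt{q}$ around $q+1$, so no hypothesis on $\ell$ and $m$ can make it output an exact integer, and in any case $k=\frac{2q+(-1)^m}{3}\approx\frac{2q}{3}$ is nowhere near $q+1$. In the paper the exact value of $k=|B_\ell|$ is not a curve-counting statement at all: it comes from Bluher's exact enumeration~\cite{Blu} of the $c\in\gf(q)^*$ for which $x^{2^\ell+1}+x+c$ has no root in $\gf(q)$ (Lemma~\ref{lem-N0}, Corollary~\ref{coro-N2}, Lemma~\ref{lem-kk1}). Your proposed collision analysis does not substitute for this: the fibres of $x\mapsto x^{2^\ell+1}+bx$ have sizes in $\{0,1,2,3\}$, and knowing the total count $q$ together with the number of collisions leaves the fibre-size distribution (hence the image size) underdetermined.

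The second gap is your assertion that the number of pairs $(b,c)$ with $|B_{(f,b,c)}|=k$ ``is precisely $b$''. This conflates pairs with blocks: $\cB_{(f,k)}$ is a set of subsets, and distinct pairs could a priori produce the same subset, in which case $b<q(q-1)$ and $\lambda$ would change. Ruling this out is equivalent to proving that the setwise stabilizer of $B_\ell$ in the affine group is trivial, and this is exactly where the paper uses the curve machinery and the hypothesis $\ell<\frac{m}{4}-1$ (equivalently $2\ell+2<\frac{m}{2}$): if $uB_\ell+v=B_\ell$ with $(u,v)\neq(1,0)$, then the affine curve $x^{2^\ell+1}+x-u\left(y^{2^\ell+1}+y\right)-v=0$ has at least $2q-k\geq (q+1)+2^{m-2}$ rational points (Lemmas~\ref{lem-N11} and~\ref{lem-kk1}), while nonsingularity, the Pl\"ucker formula and Hasse--Weil cap the count at $(q+1)+2^\ell(2^\ell-1)\sqrt{q}<(q+1)+2^{m-2}$ (Corollary~\ref{coro-N11}, Lemma~\ref{lem-miu2}), a contradiction. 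So your instinct that the curves of Section~\ref{sec-pre} and the inequality on $\ell$ are central is correct, but they are aimed at the wrong quantity: in the paper they prove the $q(q-1)$ blocks are pairwise distinct (hence $b=q(q-1)$ and $\lambda=b\binom{k}{2}/\binom{q}{2}=k(k-1)$), not the value of $k$.
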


\begin{theorem}\label{thm-main2}
Let $p$ be an odd prime with $p \equiv 3 \pmod 4$ and $m\ge 3$ be odd. Let $\ell$ be a positive integer with  $\ell < \frac{m-2}{4}$ and $\gcd(\ell, m)=1$. Let
$f(x)=x^{p^{\ell}+1}$.  Then the incidence structure $\bD(f(x), k):=(\gf(q), \cB_{(f(x),k)})$ is a $2$-$(q, k, \frac{k(k-1)}{2})$ design, where $k=q-\frac{pq-1}{2(p+1)}$.
\end{theorem}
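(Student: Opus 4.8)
The starting point is that adding the constant $c$ merely translates the set, so $|B_{(f,b,c)}|=|\image(g_b)|$ where $g_b(x)=x^{p^\ell+1}+bx$; hence the value spectrum $\VS(f)$ is governed entirely by the value-set sizes of the functions $g_b$, and these depend only on $b$. Two reductions drive the argument. First, since $p$ is odd the exponent $p^\ell+1$ is even, so $g_b(-x)=g_{-b}(x)$ and therefore $\image(g_b)=\image(g_{-b})$. Second, writing a collision $g_b(x_0+w)=g_b(x_0)$ and substituting $w=x_0 s$ turns the defining equation into $s^{p^\ell}+s^{p^\ell-1}+1=-b\,x_0^{-p^\ell}$; as $x_0$ runs over $\gf(q)^\ast$ the right-hand side runs over all of $\gf(q)^\ast$, so the multiset of nontrivial fiber sizes of $g_b$ is the same for every $b\neq 0$ and is controlled by the single $b$-independent polynomial $P(s)=s^{p^\ell}+s^{p^\ell-1}+1$. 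Consequently all $g_b$ with $b\neq 0$ share one common value-set size $k$, while $g_0$ is planar (as $m/\gcd(\ell,m)=m$ is odd) and has value-set size $(q+1)/2$, which is distinct from $k$.

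Granting this, I would next pin down $k$. The fiber of $g_b$ through $x_0$ has size $1+\rho(\beta)$, where $\rho(\beta)=\#\{s\in\gf(q):P(s)=\beta\}$, so the value-set size equals a weighted count of the form $\sum_\beta (1+\rho(\beta))^{-1}$ over the value distribution of $P$. I would express the relevant fiber-counting quantities as numbers of $\gf(q)$-rational points on the affine curves attached to $f$ — for instance the degree-$p^\ell$ curve $x^{p^\ell}+y(x-y)^{p^\ell-1}+b=0$ recording collisions, together with the auxiliary curves cut out by $P(s)=P(s')$ and its higher analogues — compute their genera by the Pl\"ucker formula \eqref{eq:genus}, and estimate the point counts by the Hasse--Weil Theorem \ref{thm:Hasse-Weil}. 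The genus of a degree-$d$ plane curve grows like $d^2/2$, so the error term $2g\sqrt q$ has size about $p^{2\ell}\sqrt q$; the hypothesis $\ell<\frac{m-2}{4}$ is exactly what makes this error small enough relative to $q=p^m$ to read off the exact integer $k=q-\frac{pq-1}{2(p+1)}$ from the estimate. I expect this exact determination of $k$ to be the main obstacle: the conclusion must be sharpened from ``$k$ up to an error'' to ``$k$ on the nose,'' which forces a careful analysis of the rational-point distribution of $P$ (equivalently of the fiber sizes of $g_b$) rather than a crude Weil bound.

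With the value spectrum in hand the design follows cheaply. By the two reductions there are exactly $(q-1)/2$ distinct sets $\image(g_b)$ of size $k$ (the pairs $\{b,-b\}$ with $b\neq0$), and each carries $q$ distinct translates once one checks that $\image(g_b)$ has trivial translation stabilizer; hence $|\cB_{(f,k)}|=q(q-1)/2$. For the design property I would use the group $H=\{\sigma_{a,d}:x\mapsto ax+d\ \mid\ a\in\textup{QR},\,d\in\gf(q)\}$. Because $\gcd(p^\ell+1,q-1)=2$ (here $m$ odd is used), the set of $(p^\ell+1)$-th powers is precisely $\textup{QR}$, and a direct computation gives $\sigma_{a,d}(B_{(f,b,c)})=B_{(f,\,\alpha^{p^\ell}b,\,ac+d)}$ whenever $a=\alpha^{p^\ell+1}$; thus $H$ maps size-$k$ blocks to size-$k$ blocks and preserves $\cB_{(f,k)}$. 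Since $p\equiv3\pmod4$ and $m$ is odd we have $q\equiv3\pmod4$, so $-1$ is a nonsquare and $H$ is sharply $2$-homogeneous on $\gf(q)$. Therefore $\cB_{(f,k)}$ is a union of $H$-orbits, each of which is a $2$-design by Theorem \ref{the-con}; a union of $2$-designs with a common block size is again a $2$-design, and summing the per-orbit relation $\lambda_i=b_i\binom{k}{2}/\binom{q}{2}$ gives
\[
\lambda=|\cB_{(f,k)}|\binom{k}{2}\Big/\binom{q}{2}=\frac{q(q-1)}{2}\cdot\frac{k(k-1)/2}{q(q-1)/2}=\frac{k(k-1)}{2},
\]
which is the asserted parameter.
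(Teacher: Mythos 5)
Your skeleton matches the paper's in outline (blocks are $uB_{\ell}+v$ with $u\in\textup{QR}$; the affine maps with square slopes form a sharply $2$-homogeneous group since $q\equiv 3\pmod 4$; Theorem \ref{the-con} then yields $\lambda=|\cB_{(f,k)}|\binom{k}{2}/\binom{q}{2}$), and your elementary observations are correct: the substitution $w=x_0s$ does show all $g_b$, $b\neq 0$, share one value-set size, and your computation $\sigma_{a,d}(B_{(f,b,c)})=B_{(f,\alpha^{p^{\ell}}b,\,ac+d)}$ is the same identity the paper uses in Lemma \ref{lem-dengzhi}. But there are two genuine gaps. The first is the exact value of $k$, which you yourself flag as the main obstacle: the route you propose (Pl\"ucker genus plus Hasse--Weil) cannot close it. The Weil error term is of size $p^{\ell}(p^{\ell}-1)\sqrt{q}\approx p^{2\ell+m/2}$, which is astronomically larger than $1$ for every admissible $\ell$, so no point-count estimate of this kind can ever pin down the exact integer $k=q-\frac{pq-1}{2(p+1)}$; it only confines $k$ to an interval containing on the order of $p^{2\ell}\sqrt{q}$ integers. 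The paper does not derive $k$ from curves at all: it invokes Bluher's exact theorem on $x^{s+1}-bx+b$ (Lemma \ref{lem-N0}, from [Blu, Theorem 5.6]), translates it into the exact number $\hat N$ of $c\in\gf(q)^{*}$ for which $x^{p^{\ell}+1}+x+c$ has no root (Corollary \ref{coro-N2}), and gets $k=|B_{\ell}|=q-\hat N$ (Lemma \ref{lem-kk1}). That exact input has no substitute in your argument.

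The second gap is the block count $|\cB_{(f,k)}|=q(q-1)/2$, on which your final $\lambda$ computation rests. You assert that the sets $\image(g_b)$, $b\neq 0$, are pairwise distinct across the pairs $\{b,-b\}$ and that each has trivial translation stabilizer, but you prove neither; moreover, triviality of the translation stabilizer alone is insufficient --- what is needed is that $uB_{\ell}+v=B_{\ell}$ with $u\in\textup{QR}$ forces $(u,v)=(1,0)$. This is precisely what the paper's curve machinery exists for: if $uB_{\ell}+v=B_{\ell}$ nontrivially, then by Lemma \ref{lem-N11} the affine curve $x^{p^{\ell}+1}+x-u\left(y^{p^{\ell}+1}+y\right)-v=0$ has at least $2q-k=(q+1)+\frac{p^{m+1}-2p-3}{2(p+1)}$ rational points, while Corollary \ref{coro-N11} (Hasse--Weil applied after the nonsingularity check of Lemma \ref{lem-airr}) caps the count at $(q+1)+p^{\ell}(p^{\ell}-1)\sqrt{q}$; the hypothesis $\ell<\frac{m-2}{4}$, i.e.\ $2\ell+1<m/2$, makes these two bounds contradictory (Lemma \ref{lem-miu2}). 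So you have misassigned the role of the hypothesis on $\ell$: it is not there to sharpen a Weil estimate for $k$ (it cannot), but to kill the stabilizer --- and the stabilizer claim is exactly the step you left unproved. Without it, Theorem \ref{the-con} still gives a $2$-design, but with an undetermined $\lambda$ dividing $k(k-1)/2$, which falls short of the stated parameters.
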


As a special case of Theorem \ref{thm-main2}, we have the following corollary.
\begin{corollary}\label{coro-caix}
Let $(p,\ell)=(3,2)$ and $m\geq 11 $ be odd. Then the incidence structure $\bD(x^{10}, k):=(\gf(3^m), \cB_{(x^{10},k)})$ is a $2$-$(3^m, k, \frac{k(k-1)}{2})$ design, where $k=\frac{5 \cdot 3^m+1}{8}$.
\end{corollary}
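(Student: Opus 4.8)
The plan is to derive this corollary as a direct specialization of Theorem~\ref{thm-main2} with $(p,\ell)=(3,2)$, so the whole task reduces to checking that the hypotheses of that theorem are met and then simplifying the resulting parameters. First I would verify the arithmetic conditions on $p$ and $\ell$: since $p=3\equiv 3\pmod 4$, the congruence requirement is satisfied, and since $m$ is odd we have $\gcd(\ell,m)=\gcd(2,m)=1$ automatically. Consequently $f(x)=x^{p^{\ell}+1}=x^{3^2+1}=x^{10}$ is exactly the function named in the corollary, and $q=p^m=3^m$.

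The one genuine point to check is the size restriction $\ell<\frac{m-2}{4}$. With $\ell=2$ this reads $2<\frac{m-2}{4}$, i.e. $m>10$; because $m$ is odd, this is equivalent to $m\ge 11$, which is precisely the hypothesis of the corollary. Thus every assumption of Theorem~\ref{thm-main2} holds, and that theorem immediately yields a $2$-$(q,k,\frac{k(k-1)}{2})$ design on $\gf(3^m)$ with $k=q-\frac{pq-1}{2(p+1)}$.

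It then remains only to rewrite $k$ in the stated closed form. Substituting $p=3$ and $q=3^m$, and noting $2(p+1)=8$, gives
\[
k=3^m-\frac{3\cdot 3^m-1}{8}=\frac{8\cdot 3^m-3^{m+1}+1}{8}=\frac{5\cdot 3^m+1}{8},
\]
which matches the claimed value and completes the argument. I do not anticipate any real obstacle here: the corollary adds no new content beyond Theorem~\ref{thm-main2}, and the only care required is translating the inequality $\ell<\frac{m-2}{4}$ into the bound $m\ge 11$ and carrying out the routine simplification of $k$.
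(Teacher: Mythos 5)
Your proposal is correct and is exactly the paper's route: the corollary is stated there as a direct specialization of Theorem~\ref{thm-main2}, with the hypothesis check $2<\frac{m-2}{4}\Leftrightarrow m\ge 11$ (for odd $m$) and the simplification $k=3^m-\frac{3\cdot 3^m-1}{8}=\frac{5\cdot 3^m+1}{8}$ being the only content. Your arithmetic and the verification of $p\equiv 3\pmod 4$ and $\gcd(2,m)=1$ are all accurate.
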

Note that the conclusion of Corollary \ref{coro-caix} also follows if $m\in \{3,5,7,9\}$, which is verified by the Magma program. This means that the conjecture $3$ in Ding and Tang \cite{DT192} is true.

\section{Proofs of the main results}\label{sec-mainproof}

Our task of this section is to prove Theorems \ref{thm-main1} and \ref{thm-main2}. To this end, we shall
prove a few more auxiliary results before proving the main results of this paper.

\subsection{Some auxiliary results}\label{sec-sub1}

\begin{lemma}\label{lem-airr}
Let $\alpha \in \gf(q)^{*}$ and $\beta \in \gf(q)^{*}$. Let $\ell$ and $m$ be integers with $1 \le \ell < m$. Let
$f=f(x,y)=x^{p^{\ell}+1}+x-\alpha \left (y^{p^{\ell}+1}+y \right )-\beta \in \gf(q)[x,y]$ and $N=|C_f|$.
Then
\[(q+1- \delta)-p^{\ell}(p^{\ell}-1)\sqrt{q}  \le N \le (q+1)+p^{\ell}(p^{\ell}-1)\sqrt{q} ,\]
where $\delta=\gcd(p^\ell+1, p^m-1)$.
\end{lemma}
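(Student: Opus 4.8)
The plan is to apply the Hasse--Weil Theorem (Theorem~\ref{thm:Hasse-Weil}) to the projective closure $\hat{C}_f$ of the affine curve $C_f$, and then correct for the rational points at infinity. First I would homogenize $f$, obtaining
\[
F(X,Y,Z)=X^{p^{\ell}+1}+XZ^{p^{\ell}}-\alpha\left(Y^{p^{\ell}+1}+YZ^{p^{\ell}}\right)-\beta Z^{p^{\ell}+1},
\]
a form of degree $d=p^{\ell}+1$. The deviation $2g\sqrt{q}$ in the Hasse--Weil bound should match the claimed $p^{\ell}(p^{\ell}-1)\sqrt{q}$ exactly when the genus is $g=(d-1)(d-2)/2=p^{\ell}(p^{\ell}-1)/2$, so the whole argument hinges on $\hat{C}_f$ being a nonsingular plane curve, to which both the Pl\"ucker formula~(\ref{eq:genus}) and Theorem~\ref{thm:Hasse-Weil} apply.

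Next I would verify nonsingularity by computing the partial derivatives in characteristic $p$. Since $p^{\ell}+1\equiv 1\pmod p$, one finds $F_X=(X+Z)^{p^{\ell}}$, $F_Y=-\alpha(Y+Z)^{p^{\ell}}$, and $F_Z=-\beta Z^{p^{\ell}}$. A common zero of $F_X,F_Y,F_Z$ therefore forces $Z=0$ (using $\beta\neq 0$), and then $X=-Z=0$ and $Y=-Z=0$ (using $\alpha\neq 0$); but $(0:0:0)$ is not a point of $\mathbb{P}^2(\gf(q^{\infty}))$. Hence $\hat{C}_f$ has no singular points, is irreducible, has genus $g=p^{\ell}(p^{\ell}-1)/2$ by~(\ref{eq:genus}), and satisfies $|N_{\mathrm{proj}}-(q+1)|\le 2g\sqrt{q}=p^{\ell}(p^{\ell}-1)\sqrt{q}$, where $N_{\mathrm{proj}}=|\hat{C}_f(\gf(q))|$.

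To pass from $N_{\mathrm{proj}}$ to the affine count $N$, I would subtract the $\gf(q)$-rational points at infinity. Setting $Z=0$ in $F$ yields $X^{p^{\ell}+1}=\alpha Y^{p^{\ell}+1}$; since $Y=0$ forces $X=0$, every such point has $Y\neq 0$, so these points correspond to the solutions $t:=X/Y$ of $t^{p^{\ell}+1}=\alpha$. The number of such rational $t$ is either $0$ or $\delta=\gcd(p^{\ell}+1,q-1)=\gcd(p^{\ell}+1,p^m-1)$, so the number $N_\infty$ of rational points at infinity satisfies $0\le N_\infty\le \delta$. Writing $N=N_{\mathrm{proj}}-N_\infty$ then gives the upper bound $N\le N_{\mathrm{proj}}\le (q+1)+p^{\ell}(p^{\ell}-1)\sqrt{q}$ and the lower bound $N\ge (q+1-2g\sqrt{q})-\delta=(q+1-\delta)-p^{\ell}(p^{\ell}-1)\sqrt{q}$, which explains why $\delta$ enters only on the left-hand side.

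The main obstacle is the nonsingularity verification: the computation must be carried out in characteristic $p$, where the factor $p^{\ell}+1$ collapses to $1$ modulo $p$ so that the partials become $p^{\ell}$-th powers, and it is essential to use $\alpha,\beta\in\gf(q)^{*}$ to rule out singular points. The remaining step is purely bookkeeping---tracking that at most $\delta$ points at infinity are removed, which shifts only the lower bound.
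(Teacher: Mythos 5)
Your proposal is correct and follows essentially the same route as the paper's proof: homogenize $f$, verify nonsingularity of $\hat{C}_f$ via the partial derivatives (using $p^{\ell}+1\equiv 1\pmod p$ and $\alpha,\beta\neq 0$), apply the Pl\"ucker formula~(\ref{eq:genus}) together with Theorem~\ref{thm:Hasse-Weil}, and then subtract the at most $\delta$ rational points at infinity, which is exactly why $\delta$ appears only in the lower bound. Your writing of the partials as $p^{\ell}$-th powers $(X+Z)^{p^{\ell}}$, $-\alpha(Y+Z)^{p^{\ell}}$, $-\beta Z^{p^{\ell}}$ is the same computation as the paper's, just expressed via the Frobenius identity.
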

\begin{proof}
Let $\mathfrak{X}$ be the projective curve $\hat{C}_f$. Let
$$F(X,Y,Z)=X^{p^{\ell}+1}+XZ^{p^{\ell}}-\alpha \left (Y^{p^{\ell}+1}+YZ^{p^{\ell}} \right )-\beta Z^{p^{\ell}+1} \in \mathrm{GF}(q)[X,Y,Z]$$ be the homogenization of $f(x,y)$ and
$(X: Y: Z)$ be a singular point of $\mathfrak{X}$. Then we have
\begin{equation*}
\left\{
\begin{aligned}
&F_X=X^{p^{\ell}}+Z^{p^{\ell}}=0\\
&F_Y=\alpha \left ( Y^{p^{\ell}}+Z^{p^{\ell}} \right )=0\\
&F_Z= \beta Z^{p^{\ell}} =0\\
&F(X,Y,Z)=0\\
\end{aligned}
\right.
\end{equation*}
Thus,
\begin{equation*}
\left\{
\begin{aligned}
&X^{p^{\ell}}+Z^{p^{\ell}}=0\\
&Y^{p^{\ell}}+Z^{p^{\ell}}=0\\
&\beta Z^{p^{\ell}} =0\\
\end{aligned}
\right.
\end{equation*}
From $\beta \neq 0$,  it follows that $X = Y = Z = 0$, a contradiction. Thus, $\mathfrak{X}$
is a nonsingular projective curve. By the Pl\"ucker formula (\ref{eq:genus}) and Theorem \ref{thm:Hasse-Weil}, we have
\begin{align}\label{eq-proj-num}
(q+1)-p^{\ell}(p^{\ell}-1)\sqrt{q}  \le \mathfrak{X}\left ( \mathrm{GF}(q) \right ) \le (q+1)+p^{\ell}(p^{\ell}-1)\sqrt{q}.
\end{align}
By multiplying through by a nonzero element of $\gf(q)$,
 we can assume the right-most nonzero coordinate of a point of $\mathbb{P}^2\left (\mathrm{GF}(q) \right )$ is $1$.
Therefore, we have
\begin{align*}
\mathfrak{X}\left (\mathrm{GF}(q) \right )= \left \{ (x:y:1): (x,y) \in C_f \right \} \cup S,
\end{align*}
where $S=\{(x:1:0) : x\in \mathrm{GF}(q), x^{p^\ell+1}=\alpha\}$.
Then
$$
| \mathfrak{X} (\mathrm{GF}(q) |=N+| S|.
$$
Since $| S| \le \gcd(p^\ell+1, p^m-1)$, the desired results follows from Inequality (\ref{eq-proj-num}).
\end{proof}

\begin{lemma}\label{lem-proj-(q+1)}
Let $\alpha \in \gf(q)\setminus \{0, 1\}$
, $\ell$ and $m$ be integers with $1 \le \ell < m$. Let
$f(x,y)=x^{p^{\ell}+1}+x-\alpha \left (y^{p^{\ell}+1}+y \right ) \in \mathrm{GF}(q)[x,y]$ and $N=| C_f ( \gf(q)  ) |$.
Then
\[  q+1-\delta \le N \le q+1,\]
where $\delta=\gcd(p^\ell+1, p^m-1)$.
\end{lemma}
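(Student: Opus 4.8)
\textbf{Proof plan for Lemma \ref{lem-proj-(q+1)}.}
The plan is to treat the curve $f(x,y)=x^{p^\ell+1}+x-\alpha(y^{p^\ell+1}+y)$ as the $\beta=0$ specialization of the curve in Lemma \ref{lem-airr}, and to extract the count of affine $\gf(q)$-rational points from the count of projective points, exactly as in the previous lemma. First I would form the homogenization
\[
F(X,Y,Z)=X^{p^\ell+1}+XZ^{p^\ell}-\alpha\bigl(Y^{p^\ell+1}+YZ^{p^\ell}\bigr)\in\gf(q)[X,Y,Z],
\]
and split $\mathfrak X(\gf(q))=\hat C_f(\gf(q))$ into the affine part $\{(x:y:1):(x,y)\in C_f\}$, which has cardinality $N$, and the part at infinity $\{(x:y:0)\}$. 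Setting $Z=0$ collapses $F$ to $X^{p^\ell+1}-\alpha Y^{p^\ell+1}=0$, so the points at infinity are governed by $(X/Y)^{p^\ell+1}=\alpha$ together with the single extra point coming from $Y=0$; I would check that $Y=0$ forces $X=0$ (since $\alpha\neq0$), so $(1:0:0)$ is not on the curve, and that the number of solutions of $t^{p^\ell+1}=\alpha$ in $\gf(q)$ is at most $\delta=\gcd(p^\ell+1,p^m-1)$. Thus the number $|S|$ of points at infinity satisfies $0\le|S|\le\delta$, giving $N=|\mathfrak X(\gf(q))|-|S|$.

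The main difference from Lemma \ref{lem-airr} is that with $\beta=0$ the curve need no longer be nonsingular, so I cannot invoke Hasse--Weil to bound $|\mathfrak X(\gf(q))|$ from above by $q+1+p^\ell(p^\ell-1)\sqrt q$; instead the claimed bound is the much tighter $q+1-\delta\le N\le q+1$. The route to this sharper estimate is to count $N$ directly rather than through the genus. The key observation is that $C_f$ is, up to the additive change tracking the map $x\mapsto x^{p^\ell+1}+x$, a \emph{graph-type} locus: writing $g(x)=x^{p^\ell+1}+x$, the affine equation reads $g(x)=\alpha\,g(y)$, i.e. $g(x)-\alpha g(y)=0$. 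I would exploit that $g$ is an $\gf(p)$-linearized-plus-linear map, hence $\gf(p)$-additive, so $g:\gf(q)\to\gf(q)$ is a homomorphism of the additive group whose image and kernel I can control: $\ker g=\{x:x^{p^\ell}=-x\}$ has size $\gcd(p^\ell+1,p^m-1)=\delta$ when $p$ is odd (and the analogous $2$-torsion count when $p=2$), and $g$ is $\delta$-to-$1$ onto its image of size $q/\delta$.

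With $g$ understood as a $\delta$-to-$1$ additive map, I would count solutions of $g(x)=\alpha g(y)$ by summing over $y\in\gf(q)$ the number of $x$ with $g(x)=\alpha g(y)$: for each $y$, this is $\delta$ if $\alpha g(y)\in\image g$ and $0$ otherwise. Since $\image g$ is an additive subgroup of index $\delta$, the condition $\alpha g(y)\in\image g$ depends only on the coset $\alpha\cdot\image g$ relative to $\image g$; the total therefore equals $\delta$ times the number of $y$ with $\alpha g(y)\in\image g$. I expect the extreme cases $\alpha\,\image g=\image g$ (all $y$ qualify, yielding $N=q$, then adjusted by $|S|$) versus $\alpha\,\image g\neq\image g$ to produce exactly the two-sided bound $q+1-\delta\le N\le q+1$ after reincorporating $|S|$; the role of the hypothesis $\alpha\neq0,1$ is to exclude the degenerate identification $g(x)=g(y)$ that would change the count. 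The main obstacle will be handling $\image g$ and its cosets carefully when $\alpha$ itself lies in $\image g$ versus a nontrivial coset, and bookkeeping the $\pm$ contribution of $|S|$ so that the affine count $N$ lands in the stated interval; once the additive-homomorphism structure of $g$ is in place, the remaining estimates are routine coset counting.
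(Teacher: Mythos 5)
Your framing matches the paper's at the outer level: you use the same decomposition $|\hat{C}_f(\gf(q))| = N + |S|$ with $|S| \le \delta$ counting points at infinity, and you are right that Hasse--Weil cannot be invoked here because the $\beta=0$ curve is singular (indeed $F_Z \equiv 0$ when $\beta=0$, and $(-1:-1:1)$ is a singular point). But your count of $N$ rests on a false structural claim: $g(x)=x^{p^\ell+1}+x$ is \emph{not} $\gf(p)$-additive. The exponent $p^\ell+1$ is not a power of $p$, so $x^{p^\ell+1}$ is not a linearized monomial; it is a quadratic (Dembowski--Ostrom) monomial -- this is exactly why the paper calls these ``quadratic functions.'' Concretely, $g(x+y)-g(x)-g(y)=x^{p^\ell}y+xy^{p^\ell}\not\equiv 0$. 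Consequently $\image\, g = B_\ell$ is not an additive subgroup, $g$ is not $\delta$-to-one, and its fibers are genuinely non-uniform: the paper's Lemma~\ref{lem-N0} (Bluher's theorem) exists precisely to quantify this non-uniformity, and Lemma~\ref{lem-kk1} shows $|B_\ell|$ is roughly $q-\frac{pq}{2(p+1)}$ (for $p=2$, about $\frac{2q}{3}$), which is not of the form $q/\delta$. Worse, your plan is internally inconsistent with the bound you are proving: if $g$ really were additive and $\delta$-to-one with $\alpha\,\image\,g=\image\,g$, summing fiber sizes over $y$ would give $N=q\delta$ (not $q$, as you wrote), which violates $N\le q+1$ whenever $\delta>1$; and $\delta=3$ does occur (e.g.\ $p=2$, $m$ even). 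So this is not a matter of careful coset bookkeeping -- the approach collapses.

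The idea you are missing, which is the paper's whole proof, is to count the projective points directly by exploiting that $Z$ enters $F$ only through $Z^{p^\ell}$, and linearly so: $F=0$ is equivalent to $(X-\alpha Y)Z^{p^\ell}=\alpha Y^{p^\ell+1}-X^{p^\ell+1}$. Since $z\mapsto z^{p^\ell}$ is a field automorphism of $\gf(q)$, every direction $(X:Y)\in\mathbb{P}^1(\gf(q))$ with $X\neq\alpha Y$ determines exactly one point of the curve, namely the one with $Z=\bigl((\alpha Y^{p^\ell+1}-X^{p^\ell+1})/(X-\alpha Y)\bigr)^{p^{m-\ell}}$; if instead $X=\alpha Y$, then $\alpha(1-\alpha^{p^\ell})Y^{p^\ell+1}=0$, and the hypothesis $\alpha\notin\{0,1\}$ together with injectivity of Frobenius forces $Y=0$, so the only curve point with $X=\alpha Y$ is $(0:0:1)$. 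This gives $q$ points from the $q$ admissible directions plus the point $(0:0:1)$, i.e.\ $|\hat{C}_f(\gf(q))|=q+1$ exactly, whence $N=q+1-|S|$ lies in $[q+1-\delta,\,q+1]$. No genus bound is needed at all; the sharp two-sided estimate comes from the curve being resolvable for $Z$ via the Frobenius, which is precisely what your additive-structure substitute fails to capture.
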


\begin{proof}
Let $\mathfrak{X}$ be the projective curve $\hat{C}_f$ and
$F(X,Y,Z)=X^{p^{\ell}+1}+XZ^{p^{\ell}}-\alpha \left (Y^{p^{\ell}+1}+YZ^{p^{\ell}} \right ) \in \mathrm{GF}(q)[X,Y,Z]$ be the homogenization of $f(x,y)$.

Let $(X:Y: Z)  \in \mathfrak{X}\left (  \mathrm{GF}(q)\right )$. Then we have
\[\left ( X-\alpha Y \right ) Z^{p^\ell}= \alpha Y^{p^\ell+1} - X^{p^\ell +1}.\]
If $X=\alpha Y$, then $0=\alpha Y^{p^\ell+1}- \alpha^{p^\ell +1} Y^{p^\ell+1}=\alpha \left (1-\alpha^{p^\ell}  \right ) Y^{p^\ell+1} $.
By $\alpha\neq 0$ and $\alpha\neq 1 $, we know that $(X:Y:Z)$ must be the point $(0: 0: 1)$.\\
If $X\neq \alpha Y$ and $Y=0$, then $X Z^{p^\ell}= - X^{p^\ell +1}$. Thus,  $(X:Y:Z)$ must be the point $(1: 0: -1)$.\\
If $X\neq \alpha Y$ and $Y\neq 0$, then
\[ Z=\left ( \frac{\alpha Y^{p^\ell+1}-X^{p^\ell+1}}{X-\alpha Y} \right )^{p^{m-\ell}}.\]
Thus,  $(X:Y:Z)$ must be the point $ \left (x: 1: \left ( \frac{\alpha -x^{p^\ell+1}}{x-\alpha } \right )^{p^{m-\ell}} \right )$ with $x\in \mathrm{GF}(q) \setminus \{ \alpha \}$.
Hence,
\begin{align}\label{eq-proj-num-(q+1)}
  \mathfrak{X}\left ( \mathrm{GF}(q) \right ) =  q+1.
\end{align}
Note that
\begin{align*}
\mathfrak{X}\left (\mathrm{GF}(q) \right )= \left \{ (x:y:1): (x,y) \in C_f \right \} \cup S,
\end{align*}
where $S=\{(x:1:0) : x\in \mathrm{GF}(q), x^{p^\ell+1}=\alpha\}$.
It then follows that
$$|\mathfrak{X}\left (\mathrm{GF}(q) \right )|=N+ |S|.$$
Since $|S|\leq \gcd(p^\ell+1, p^m-1)$, the proof is then completed by Inequality (\ref{eq-proj-num-(q+1)}).
\end{proof}

By Lemmas \ref{lem-airr} and \ref{lem-proj-(q+1)}, we have the following corollary.
\begin{corollary}\label{coro-N11}
Let $(\alpha, \beta) \in \mathrm{GF}(q)^{*} \times \mathrm{GF}(q)$  with $(\alpha, \beta) \neq (1,0)$
. Let $\ell$ and $m$ be integers with $1 \le \ell < m$. Let $N=| C_f \left ( \gf(q) \right ) |$ , where
$f(x,y)=x^{p^{\ell}+1}+x-\alpha \left (y^{p^{\ell}+1}+y \right )-\beta \in \mathrm{GF}(q)[x,y]$.
Then
\[(q- p^{\ell})-p^{\ell}(p^{\ell}-1)\sqrt{q}  \le N \le (q+1)+p^{\ell}(p^{\ell}-1)\sqrt{q}.\]
\end{corollary}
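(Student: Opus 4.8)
The plan is to obtain Corollary~\ref{coro-N11} by combining the two lemmas directly, splitting on the value of $\beta$. Since the hypotheses of the corollary put $(\alpha,\beta)\in \gf(q)^*\times \gf(q)$ with $(\alpha,\beta)\neq(1,0)$, the pair $(\alpha,\beta)$ falls into exactly one of two cases, and each case is governed by one of the two lemmas already proved.

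First I would treat the case $\beta\neq 0$. Here $\alpha\in\gf(q)^*$ and $\beta\in\gf(q)^*$, so Lemma~\ref{lem-airr} applies verbatim to $f(x,y)=x^{p^\ell+1}+x-\alpha(y^{p^\ell+1}+y)-\beta$, yielding
\[
(q+1-\delta)-p^\ell(p^\ell-1)\sqrt{q}\;\le\; N\;\le\;(q+1)+p^\ell(p^\ell-1)\sqrt{q},
\]
where $\delta=\gcd(p^\ell+1,p^m-1)$. To match the stated corollary I would then replace the lower bound $q+1-\delta$ by the weaker but uniform bound $q-p^\ell$. This is a crude estimate justified by the elementary fact $\delta=\gcd(p^\ell+1,p^m-1)\le p^\ell+1$, so that $q+1-\delta\ge q+1-(p^\ell+1)=q-p^\ell$. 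The upper bound is already in the form claimed.

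Next I would treat the case $\beta=0$. Then $(\alpha,\beta)\neq(1,0)$ forces $\alpha\neq 1$, and together with $\alpha\in\gf(q)^*$ this gives $\alpha\in\gf(q)\setminus\{0,1\}$, which is exactly the hypothesis of Lemma~\ref{lem-proj-(q+1)}. That lemma yields $q+1-\delta\le N\le q+1$. Again using $\delta\le p^\ell+1$, the lower bound satisfies $q+1-\delta\ge q-p^\ell$, and the upper bound $q+1$ is trivially at most $(q+1)+p^\ell(p^\ell-1)\sqrt{q}$ since $p^\ell(p^\ell-1)\sqrt{q}\ge 0$ (with the product being zero only in the degenerate range $\ell$ small, where the estimate still holds). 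Thus in both cases $N$ lies in the interval $[\,(q-p^\ell)-p^\ell(p^\ell-1)\sqrt{q},\,(q+1)+p^\ell(p^\ell-1)\sqrt{q}\,]$, which completes the proof.

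There is essentially no hard step here: the corollary is a bookkeeping consolidation of the two lemmas into a single two-sided estimate valid uniformly over the parameter region $(\alpha,\beta)\in\gf(q)^*\times\gf(q)\setminus\{(1,0)\}$. The only point requiring minor care is verifying that the two lemmas' hypotheses partition this region exactly—$\beta\neq 0$ routes to Lemma~\ref{lem-airr}, while $\beta=0$ together with $(\alpha,\beta)\neq(1,0)$ routes to Lemma~\ref{lem-proj-(q+1)}—and then checking that the common weakened lower bound $q-p^\ell$ dominates the sharper lemma-specific bound $q+1-\delta$ in each case via the elementary inequality $\delta\le p^\ell+1$.
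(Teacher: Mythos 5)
Your proposal is correct and matches the paper's own argument: the paper derives this corollary exactly by combining Lemma \ref{lem-airr} (the case $\beta \neq 0$) with Lemma \ref{lem-proj-(q+1)} (the case $\beta = 0$, where $\alpha \notin \{0,1\}$), and your unification of the two lower bounds via $\delta = \gcd(p^\ell+1, p^m-1) \le p^\ell+1$, giving $q+1-\delta \ge q-p^\ell$, is precisely the bookkeeping step the paper leaves implicit.
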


\begin{lemma}\label{lem-N11}
Let $(\alpha, \beta) \in \mathrm{GF}(q)^{*} \times \mathrm{GF}(q)$  with $(\alpha, \beta) \neq (1,0)$.
Let $\ell$ and $m$ be integers with $1 \le \ell < m$.
Let $N=| C_f \left ( \mathrm{GF}(q) \right )| $, where
$f(x,y)=x^{p^{\ell}+1}+x-\alpha \left (y^{p^{\ell}+1}+y \right )-\beta \in \mathrm{GF}(q)[x,y]$.
Define
\begin{eqnarray}\label{eqn-Je11}
B_{\ell}= \left \{ x^{p^\ell+1}+x:  x\in \gf (q)\right \}.
\end{eqnarray}
If $B_{\ell}= \alpha B_l +\beta $,
then
\[N \ge 2q- | B_{\ell}|.\]
\end{lemma}
\begin{proof}
Let $h_0(x)=x^{p^\ell+1}+x$ and $h_1(x)=\alpha \left (x^{p^{\ell}+1}+x \right )+\beta$.
Let $k= | B_{\ell}|$ and $B_{\ell}=\{z_1, \cdots, z_k\}$. For any $z\in \gf(q)$,
let $h_0^{-1}(z)=\{x\in \gf(q): h_0(x)=z\}$.
Then we have
\[ N=\sum_{y\in \gf (q)} | h_0^{-1} \left (h_1(y) \right )|.\]
Since $B_{\ell}= \alpha B_l +\beta $, we have  $| h_0^{-1} \left (h_1(y) \right ) |\ge 1$ for any $y\in \gf(q)$.
Then
\begin{align*}
N& =q+\sum_{y\in \gf (q)} \left ( | h_0^{-1} \left (h_1(y) \right )| -1 \right )\\
&\ge q+ \sum_{i=1}^k \left ( | h_0^{-1} \left (z_i \right )| -1 \right )\\
& = q-k + \sum_{i=1}^k | h_0^{-1} \left (z_i \right )| \\
&= 2q-k.
\end{align*}
This then completes the proof.
\end{proof}

In order to obtain Corollary \ref{coro-N2}, we need the following result which was proved in \cite[Theorem 5.6]{Blu}.

\begin{lemma}~\cite[Theorem 5.6]{Blu}\label{lem-N0}
Let $F$ be an arbitrary finite field of characteristic $p$, $s$ be a power of $p$ and $F \bigcap \gf(s)=\gf(t)$. Let $0\neq b \in F$ and $N_0$ denote the number of $b$ such that the polynomial
$x^{s+1}-bx+b$ has no rational root in $F$. Then

\begin{eqnarray}\label{eqn-N0}
N_0 &=&\left\{ \begin{array}{ll}
\frac{t^{\hat{m}+1}-t}{2(t+1)}    & \mbox{ if $\hat{m}$ is even}, \\
\frac{t^{\hat{m}+1}-1}{2(t+1)}    & \mbox{ if $\hat{m}$ is odd and $s$ is odd}, \\
\frac{t^{\hat{m}+1}+t}{2(t+1)}    & \mbox{ if $\hat{m}$ is odd and $s$ is even},
\end{array}
\right.
\end{eqnarray}
where $\hat{m}=[F:\gf(t)]$.
\end{lemma}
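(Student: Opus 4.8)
The plan is to reduce the computation of $N_0$ to the size of a value set, and then to determine that value set by a collision-counting (moment) argument.

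First I would normalise the trinomial. For $b \in F^{*}$ the substitution $x = b^{1/s}y$ (the $s$-th root is unique since $s$ is a power of $p$) turns $x^{s+1}-bx+b$ into $b^{(s+1)/s}\bigl(y^{s+1}-y+B\bigr)$ with $B=b^{-1/s}$, so the two polynomials have exactly the same roots in $F$; and as $b$ runs over $F^{*}$ the element $B=b^{-1/s}$ runs over $F^{*}$ bijectively. Since $y^{s+1}-y+B=0$ is equivalent to $B=y-y^{s+1}$, the polynomial $x^{s+1}-bx+b$ has a root in $F$ precisely when $B$ lies in the value set $V:=\{\,y^{s+1}-y : y\in F\,\}$. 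Because $0=\psi(0)\in V$ (writing $\psi(y)=y^{s+1}-y$), every $B\in F^{*}$ missing $V$ is automatically nonzero, so $N_0=|F|-|V|$. Everything is thereby reduced to evaluating the value-set size $|V|$.

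To compute $|V|$ I would analyse the fibres of $\psi$. The fibre over $c$ is the root set of $x^{s+1}-x-c$, and the structural input I would invoke (Bluher's root-count theorem, applied with base field $\gf(t)=F\cap\gf(s)$, noting that on $F$ the map $z\mapsto z^{s}$ generates $\mathrm{Gal}(F/\gf(t))$) is that every fibre has size $0$, $1$, $2$, or $t+1$. Writing $n_1,n_2,n_{t+1}$ for the numbers of values with fibres of these sizes, the identity $\sum_c|\psi^{-1}(c)|=|F|$ gives one linear relation. A second comes from counting ordered collisions $\{(x,y):x\neq y,\ \psi(x)=\psi(y)\}$: using $x^{s}-y^{s}=(x-y)^{s}$, then the substitutions $x=y+u$ and $y=u\rho$, one rewrites the collision condition as the Artin--Schreier-type equation $z^{s}+z=u^{-s}-1$; summing solution counts over $u\in F^{*}$ shows the number of collisions equals $|F|-|\ker T|\cdot\varepsilon$, where $T(z)=z^{s}+z$ and $\varepsilon=[\,-1\in\image T\,]$. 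A short computation fixes these two parameters from parities: $\ker T=\{z:z^{s}=-z\}$ has size $t$ in all cases except when $\hat m$ and $s$ are both odd (where it is $1$), using $\gcd(s-1,|F|-1)=t-1$ together with $(|F|-1)/(t-1)\equiv\hat m\pmod 2$; and $\varepsilon$ is governed by a trace condition through additive Hilbert~90 (for instance $\image T=\ker\Tr_{F/\gf(t)}$ in characteristic $2$).

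The hard part is the remaining unknown $n_{t+1}$, the number of $c$ for which $x^{s+1}-x-c$ splits completely, i.e.\ whose $t+1$ roots realise a full $\PGL(2,t)$/projective-line configuration over $\gf(t)$. The two moment relations only yield the one-parameter identity $t(t-1)\,n_{t+1}=2|V|-|F|-|\ker T|\varepsilon$, so $|V|$ stays undetermined until $n_{t+1}$ is counted independently — and this is exactly where Bluher's finer geometric analysis of the completely split case is required, the count of those special $c$ being what separates into the three cases according to the parity of $\hat m$ (and, when $\hat m$ is odd, of $s$). Once $n_{t+1}$ is in hand, back-substitution gives $|V|$ and hence $N_0=|F|-|V|$, which I would finally match case by case to the three closed forms $\tfrac{t^{\hat m+1}-t}{2(t+1)}$, $\tfrac{t^{\hat m+1}-1}{2(t+1)}$, $\tfrac{t^{\hat m+1}+t}{2(t+1)}$. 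As a sanity check, the small cases $F=\gf(8)$ and $F=\gf(16)$ with $s=2$ already give $n_{t+1}=1$ and $n_{t+1}=2$ and reproduce $N_0=3$ and $N_0=5$, in agreement with the $\hat m$ odd/even formulas.
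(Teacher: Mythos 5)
First, a point of comparison: the paper offers no proof of this lemma at all --- it is quoted verbatim from Bluher (reference [Blu], Theorem 5.6) with the explicit remark that it ``was proved in'' that paper, and it is used as an external black box. So the only question is whether your blind attempt constitutes an independent proof. It does not: there is a genuine gap, and you in fact name it yourself.

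Your reduction is sound as far as it goes. The substitution $x=b^{1/s}y$ and the bijection $b\mapsto b^{-1/s}$ of $F^{*}$ correctly give $N_0=|F|-|V|$ with $V=\{y^{s+1}-y: y\in F\}$ (the sign mismatch between $B=y-y^{s+1}$ and your $V$ is harmless, since negation is a bijection of $F$ fixing $0$); the collision count $|F|-|\ker T|\cdot\varepsilon$, the evaluation of $|\ker T|$ via $\gcd(s-1,|F|-1)=t-1$, and the resulting identity $t(t-1)\,n_{t+1}=2|V|-|F|-|\ker T|\varepsilon$ are all correct, as your $\gf(8)$ and $\gf(16)$ checks confirm. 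But the argument stops exactly where the theorem begins. You import Bluher's fibre-size classification (each fibre of $\psi$ has $0$, $1$, $2$ or $t+1$ points) as ``structural input,'' and you defer the count of completely split values $n_{t+1}$ --- the one quantity that produces the three-way case split on the parities of $\hat m$ and $s$, i.e.\ the actual content of the closed formulas --- to ``Bluher's finer geometric analysis.'' Since both of these inputs are the substance of the very paper whose Theorem 5.6 you are proving, your proposal is a (correct) reduction of the statement to its two hardest ingredients, not a proof of it. To close the gap you would need an independent determination of $n_{t+1}$ (for instance by parametrizing the completely split trinomials $x^{s+1}-x-c$ through $\PGL(2,t)$-orbit or norm-type arguments), together with an odd-characteristic evaluation of $\varepsilon$; your Hilbert~90 remark pins $\varepsilon$ down only in characteristic $2$.
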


\begin{corollary}\label{coro-N2}
Let $\ell$ be a positive integer with $\gcd(\ell, m)=1$. Let $\hat{N}$ denote the number of $c\in \gf(q)^*$ such that the polynomial
$x^{p^l+1}+x+c$ has no rational root in $\gf(q)$. Then

\begin{eqnarray}\label{eqn-N2}
\hat{N} &=&\left\{ \begin{array}{ll}
\frac{p^{m+1}-p}{2(p+1)}    & \mbox{ if $m$ is even}, \\
\frac{p^{m+1}-1}{2(p+1)}    & \mbox{ if $m$ is odd and $p^{\ell}$ is odd}, \\
\frac{p^{m+1}+p}{2(p+1)}    & \mbox{ if $m$ is odd and $p^{\ell}$ is even}.
\end{array}
\right.
\end{eqnarray}
\end{corollary}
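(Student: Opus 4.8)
The plan is to deduce the corollary directly from Lemma \ref{lem-N0} by a change of variables that converts the polynomial family $x^{p^\ell+1}+x+c$ into the family $x^{s+1}-bx+b$ appearing in that lemma, and then by identifying the auxiliary parameters $t$ and $\hat m$ with $p$ and $m$.

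First I would set up the substitution. Fix $c\in\gf(q)^{*}$, write $s=p^{\ell}$, and put $P_c(x):=x^{s+1}+x+c$. Replacing $x$ by $-cu$ and using that $s+1$ is even when $p$ is odd (so that $(-c)^{s+1}=c^{s+1}$) gives
$$P_c(-cu)=c^{s+1}u^{s+1}-cu+c.$$
Dividing through by the nonzero scalar $c^{s+1}$ produces $u^{s+1}-c^{-s}u+c^{-s}$, which is exactly the polynomial $Q_b(u):=u^{s+1}-bu+b$ of Lemma \ref{lem-N0} with $b=c^{-s}$ in both the linear and the constant positions. In characteristic $2$ the identical computation applies with $-c=c$. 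Since $x\mapsto -cu$ is an invertible affine map of $\gf(q)$ and we have only rescaled by a nonzero constant, $P_c$ has a root in $\gf(q)$ if and only if $Q_b$ does.

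Next I would verify that $c\mapsto b=c^{-s}$ is a bijection of $\gf(q)^{*}$ onto itself. This holds because $\gcd(s,q-1)=\gcd(p^{\ell},p^{m}-1)=1$, so $x\mapsto x^{s}$ is a power of the Frobenius automorphism and hence a bijection of $\gf(q)^{*}$, as is its inverse $x\mapsto x^{-s}$. Consequently the number $\hat N$ of $c\in\gf(q)^{*}$ for which $P_c$ has no root in $\gf(q)$ equals the number $N_0$ of $b\in\gf(q)^{*}$ for which $Q_b$ has no root in $\gf(q)$.

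Finally I would apply Lemma \ref{lem-N0} with $F=\gf(q)=\gf(p^{m})$ and $s=p^{\ell}$. Here $F\cap\gf(s)=\gf(p^{m})\cap\gf(p^{\ell})=\gf(p^{\gcd(m,\ell)})=\gf(p)$, using $\gcd(\ell,m)=1$, so $t=p$ and $\hat m=[\gf(p^{m}):\gf(p)]=m$. Substituting $t=p$ and $\hat m=m$ into the three cases of the lemma, with the parity of $s=p^{\ell}$ governing the two odd-$\hat m$ cases, reproduces the three displayed values of $\hat N$ and completes the proof. The only genuine content beyond bookkeeping is spotting the scaling $x=-cu$ that normalizes the linear and constant coefficients simultaneously; checking the bijection and identifying $t$ and $\hat m$ are then routine, so I expect the substitution step to be the main obstacle.
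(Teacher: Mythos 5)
Your proof is correct and follows essentially the same route as the paper: both reduce to Bluher's count (Lemma \ref{lem-N0}) with $F=\gf(p^m)$, $t=p$, $\hat m=m$, via the scaling substitution that identifies $x^{p^\ell+1}+x+c$ with $x^{s+1}-bx+b$ under the bijection $b=c^{-s}$ of $\gf(q)^*$. The only cosmetic difference is direction: the paper transforms Bluher's polynomial into the target form (replacing $b$ by $b^{p^\ell}$ and setting $x=-by$), while you transform the target into Bluher's form, which is the same change of variables run backwards.
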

\begin{proof}
  In Lemma \ref{lem-N0}, we let $F=\gf(p^m)$ and $s=p^{\ell}$ with $\gcd(\ell, m)=1$. Then $$F \cap \gf(s)= \gf(p^m)\cap \gf(p^{\ell})=\gf(p)$$
and
$$\hat{m}=[F:\gf(t)]=[\gf(p^m):\gf(p)]=m.$$
Further, since $x^{p^{\ell}}$ is a permutation of $\gf(p^m)$, we have
\begin{align*}
x^{s+1}-bx+b& =x^{p^{\ell}+1}-bx+b           \\
&= x^{p^{\ell}+1}-b^{p^{\ell}}x+b^{p^{\ell}}~~~~~~~~~~~~~~~~~~~~~~~~~~~~~~~~~(b~~is ~~replaced~~with~~{b^{\ell}}) \\
&= (-by)^{p^{\ell}+1}-b^{p^{\ell}}(-by)+b^{p^{\ell}}~~~~~~~~~~~~~~~~~(Let~~x=-by) \\
&= b^{{p^{\ell}+1}}y^{p^{\ell}+1}+b^{p^{\ell}+1}y+b^{p^{\ell}} ~~~~~~~~~~~~~~~~~~~~~~~~~~~~~~~~~~~\\
&=b^{{p^{\ell}+1}}(y^{p^{\ell}+1}+y+b^{-1}) .
\end{align*}
Since $b\in \gf(p^m)^*$, $b^{{p^{\ell}+1}}(y^{p^{\ell}+1}+y+b^{-1})=0$ is equivalent to
$$
x^{p^{\ell}+1}+x+c=0,
$$
where $c\in \gf(p^m)^*$. The desired conclusion then follows from Lemma \ref{lem-N0}.
\end{proof}

\begin{lemma}\label{lem-kk1}
Let $m$ and $\ell$ be a positive integer with $\gcd(\ell, m)=1$.
Then
\begin{eqnarray}\label{eqn-k111}
|B_{\ell}|
 &=&\left\{ \begin{array}{ll}
q-\frac{p^{m+1}-p}{2(p+1)}    & \mbox{ if $m$ is even}, \\
q-\frac{p^{m+1}-1}{2(p+1)}    & \mbox{ if $m$ is odd and $p^{\ell}$ is odd}, \\
q-\frac{p^{m+1}+p}{2(p+1)}    & \mbox{ if $m$ is odd and $p^{\ell}$ is even},
\end{array}
\right.
\end{eqnarray}
where $B_\ell$ was defined by (\ref {eqn-Je11}).
\end{lemma}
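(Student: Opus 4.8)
The plan is to reduce the computation of $|B_\ell|$ to the root-counting result already established in Corollary \ref{coro-N2}. The set $B_\ell$ is by definition the image of the function $h_0(x)=x^{p^\ell+1}+x$ on $\gf(q)$, so $|B_\ell|$ is exactly the number of distinct values attained by $h_0$. I would count these values by deciding, for each prospective value $v\in\gf(q)$, whether $v$ lies in the image, i.e.\ whether the equation $h_0(x)=v$ admits a solution $x\in\gf(q)$.

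First I would dispose of the value $v=0$: since $h_0(0)=0$, we always have $0\in B_\ell$, contributing $1$ to the count. For nonzero $v$, observe that $v\in B_\ell$ if and only if $x^{p^\ell+1}+x-v=0$ has a rational root, which is precisely the condition that $x^{p^\ell+1}+x+c=0$ has a rational root with $c=-v$. As $v$ runs over $\gf(q)^*$, the substitution $c=-v$ is a bijection of $\gf(q)^*$ onto itself, so the number of nonzero elements of $B_\ell$ equals the number of $c\in\gf(q)^*$ for which $x^{p^\ell+1}+x+c$ possesses a rational root. By definition $\hat N$ counts those $c\in\gf(q)^*$ for which this polynomial has \emph{no} rational root, and there are $q-1$ nonzero values of $c$ in total; hence the number of nonzero elements of $B_\ell$ is $(q-1)-\hat N$. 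Combining with the contribution of $v=0$ yields $|B_\ell|=1+(q-1)-\hat N=q-\hat N$.

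Finally I would substitute the three explicit values of $\hat N$ furnished by Corollary \ref{coro-N2} (according to the parity of $m$ and of $p^\ell$, using the hypothesis $\gcd(\ell,m)=1$) into $|B_\ell|=q-\hat N$; each case then produces exactly the corresponding formula in \eqref{eqn-k111}, completing the proof. There is no serious obstacle here: the only point demanding care is the bookkeeping of the distinguished value $v=0$ together with the restriction of $\hat N$ to $\gf(q)^*$, since conflating the image-set count over all of $\gf(q)$ with the root-count over $\gf(q)^*$ would introduce an off-by-one error. Setting up the bijection $v\leftrightarrow c=-v$ cleanly, and treating $0\in B_\ell$ separately, resolves this.
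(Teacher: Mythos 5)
Your proposal is correct and takes essentially the same approach as the paper: the paper's own proof simply asserts $|B_\ell| = q - \hat{N}$ ``by definition'' and then cites Corollary \ref{coro-N2}, which is exactly your reduction. The only difference is that you make explicit the bookkeeping the paper leaves implicit --- treating $0 \in B_\ell$ separately and matching nonzero image values to $c \in \gf(q)^*$ via the bijection $c = -v$ --- which is a sound (and welcome) filling-in of detail rather than a different method.
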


\begin{proof}
By definition, we have
$$|B_{\ell}|=q-\hat{N}$$
where $\hat{N}$ was defined by Corollary \ref{coro-N2}. This means that Equation (\ref{eqn-k111}) follows. This completes the proof.
\end{proof}

\begin{lemma}\label{lem-miu2}
Let $m$ and $\ell$ be a positive integer with $\gcd(\ell, m)=1$. Define
$$
\Stab(B_{\ell})=\{ux+v: (u,v) \in \gf(q)^* \times \gf(q), \ uB_{\ell}+v=B_{\ell}\}
$$
and $\mu=|\Stab(B_{\ell})|$, where $B_\ell$ was defined by (\ref {eqn-Je11}).
Then we have the following.

(\uppercase\expandafter{\romannumeral1}) If $p=2$, $m\geq 4 $ is even and $2\ell+2< m/2$, then $\mu=1$.

(\uppercase\expandafter{\romannumeral2}) If $p=2$, $m\geq 3 $ is odd and $2\ell+2< m/2$, then $\mu=1$.

(\uppercase\expandafter{\romannumeral2}) If $p\equiv3 ~mod~4$, $m\geq 3 $ is odd and $2\ell+1< m/2$, then $\mu=1$.

\end{lemma}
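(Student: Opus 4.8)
The plan is to show that the identity pair $(u,v)=(1,0)$ is the only element of $\gf(q)^*\times\gf(q)$ with $uB_\ell+v=B_\ell$, which forces $\mu=1$. Assume for contradiction that some $(u,v)\neq(1,0)$ satisfies $uB_\ell+v=B_\ell$. Writing $\alpha=u\in\gf(q)^*$ and $\beta=v$, this set equation is precisely the hypothesis $B_\ell=\alpha B_\ell+\beta$ of Lemma~\ref{lem-N11} for the affine curve $C_f$ attached to $f(x,y)=x^{p^\ell+1}+x-\alpha(y^{p^\ell+1}+y)-\beta$. The whole argument then consists of trapping $N=|C_f(\gf(q))|$ between two bounds that cannot coexist.

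First I would apply Lemma~\ref{lem-N11} to obtain the lower bound $N\ge 2q-|B_\ell|$, which encodes the stabilizer relation. Next, since $(\alpha,\beta)\neq(1,0)$, Corollary~\ref{coro-N11} supplies the Hasse--Weil upper bound $N\le(q+1)+p^\ell(p^\ell-1)\sqrt q$. Combining the two and cancelling $q$ leaves the single inequality
\[
q-|B_\ell|-1\le p^\ell(p^\ell-1)\sqrt q .
\]
Here the left-hand side is linear in $q$ while the right-hand side is bounded by $p^{2\ell}\sqrt q=p^{2\ell+m/2}$, so the strategy is to substitute the exact value of $q-|B_\ell|$ from Lemma~\ref{lem-kk1} and show that the smallness hypothesis on $\ell$ makes the right-hand side strictly smaller than the left.

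The argument then branches into the three stated regimes. In case (I) ($p=2$, $m$ even) Lemma~\ref{lem-kk1} gives $q-|B_\ell|=\frac{q-1}{3}$; in case (II) ($p=2$, $m$ odd, so $p^\ell$ even) it gives $q-|B_\ell|=\frac{q+1}{3}$; and in case (III) ($p\equiv3\pmod 4$, $m$ odd, so $p^\ell$ odd) it gives $q-|B_\ell|=\frac{pq-1}{2(p+1)}$. In each case I bound $p^\ell(p^\ell-1)\sqrt q<p^{2\ell+m/2}$ and invoke the hypothesis, namely $2\ell+2<m/2$ in (I) and (II) and $2\ell+1<m/2$ in (III), to deduce $2\ell+m/2<m-2$ in the first two cases and $2\ell+m/2<m-1$ in the third. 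Hence the right-hand side of the key inequality is below $q/4$ in (I),(II) and below $q/p$ in (III). A direct computation then shows that the left-hand side already dominates these thresholds throughout the admissible ranges: $\frac{q-4}{3}\ge q/4$ for $q\ge16$ (case I), $\frac{q-2}{3}\ge q/4$ for $q\ge8$ (case II), and $\frac{pq-1}{2(p+1)}-1\ge q/p$ for $q\ge27$ (case III). This contradicts the key inequality and rules out any nontrivial $(u,v)$.

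The only delicate point --- and the reason the thresholds read $2\ell+2$ in (I),(II) but $2\ell+1$ in (III) --- is that the elementary estimates must be checked to run in the correct direction in each case, i.e. one must verify that the linear-in-$q$ quantity $q-|B_\ell|$ genuinely exceeds $p^{2\ell+m/2}$ under the precise bound on $\ell$. I anticipate no conceptual obstacle beyond this bookkeeping, since the substantive work has already been carried out in Lemmas~\ref{lem-N11} and \ref{lem-kk1} and in Corollary~\ref{coro-N11}.
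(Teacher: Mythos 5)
Your proposal is correct and follows essentially the same route as the paper's own proof: assume a nontrivial stabilizer element $(\alpha,\beta)\neq(1,0)$, combine the lower bound $N\ge 2q-|B_\ell|$ from Lemma~\ref{lem-N11} with the Hasse--Weil upper bound of Corollary~\ref{coro-N11}, substitute the exact value of $|B_\ell|$ from Lemma~\ref{lem-kk1}, and let the hypothesis on $\ell$ force a contradiction. The only differences are cosmetic rearrangements of the same inequalities: your thresholds $q/4$ and $q/p$ are exactly the paper's $2^{m-2}$ and $p^{m-1}$, and your residual checks ($q\ge 16$, $q\ge 8$, $q\ge 27$) correspond to the paper's verifications that the lower-bound excess dominates those thresholds.
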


\begin{proof}
We now prove the three cases in the following.

(\uppercase\expandafter{\romannumeral1})
By definition, it is clear that $(1,0)\in \Stab(B_{\ell})$. Suppose that $\mu=|\Stab(B_{\ell})| \neq 1$, then there must exist
$(\alpha, \beta) \in \Stab(B_{\ell}) $  with $(\alpha, \beta) \neq (1,0)$. From  Corollary \ref{coro-N11}, it follows that
\begin{eqnarray}\label{eqn-md1}
N\le (q+1)+p^{\ell}(p^{\ell}-1)\sqrt{q},
\end{eqnarray}
where $N$ was defined by Corollary \ref{coro-N11}.

Meanwhile, by Lemmas \ref{lem-N11} and \ref{lem-kk1}, we have
\begin{eqnarray}\label{eqn-md2}
N\geq 2q-k=q+\frac{p^{m+1}-p}{2(p+1)}=(q+1)+ \frac{p^{m+1}-3p-2}{2(p+1)}.
\end{eqnarray}
Since $p=2$ and $m\geq4$ is even, we have
$$
\frac{p^{m+1}-3p-2}{2(p+1)}-2^{m-2}=\frac{2^{m+1}-8}{6}- 2^{m-2}=\frac{1}{3}(2^{m-2}-4)\geq 0.
$$
This means that
\begin{eqnarray}\label{eqn-md3}
\frac{p^{m+1}-3p-2}{2(p+1)}\geq2^{m-2}.
\end{eqnarray}
Therefore, from  Equations (\ref{eqn-md2}) and (\ref{eqn-md3}), we have
\begin{eqnarray}\label{eqn-md4}
N\geq(q+1)+2^{m-2}.
\end{eqnarray}
Furthermore, by $2\ell+2< m/2$ and Equations (\ref{eqn-md1}) we have
\begin{eqnarray}\label{eqn-md5}
N\le (q+1)+2^{\ell}(2^{\ell}-1)2^{m/2}< (q+1)+2^{2\ell+m/2}< (q+1)+2^{m-2},
\end{eqnarray}
which contradicts to Equations (\ref{eqn-md4}). This means that there does not exist $(\alpha, \beta) \in \Stab(B_{\ell})$  with $(\alpha, \beta) \neq (1,0)$. Hence, $\mu=|\Stab(B_{\ell})| =|\{(1,0)\}|= 1$.

(\uppercase\expandafter{\romannumeral2})
The proof is similar to case (\uppercase\expandafter{\romannumeral1}) and we omit it here. The desired conclusion then follows from
Lemma \ref{lem-N11} and  Corollary \ref{coro-N11}.

(\uppercase\expandafter{\romannumeral3}) By definition, it is clear that $(1,0)\in \Stab(B_{\ell})$. Suppose that $\mu=|\Stab(B_{\ell})| \neq 1$, then there must exist
$(\alpha, \beta) \in \Stab(B_{\ell})$  with $(\alpha, \beta) \neq (1,0)$. From  Corollary \ref{coro-N11}, we have

\begin{eqnarray}\label{eqn-md21}
N\le (q+1)+p^{\ell}(p^{\ell}-1)\sqrt{q},
\end{eqnarray}
where $N$ is defined by Corollary \ref{coro-N11}. Meanwhile, by Lemmas \ref{lem-N11} and \ref{lem-kk1}, we have
\begin{eqnarray}\label{eqn-md22}
N\geq 2q-k=q+\frac{p^{m+1}-1}{2(p+1)}=(q+1)+ \frac{p^{m+1}-2p-3}{2(p+1)}.
\end{eqnarray}
Since $p\geq 3$ and $m\geq3$ is odd, we have
$$
(p^{m+1}-2p-3)-2(p+1)p^{m-1}=p(p^{m-2}(p^2-2p-2)-2)-3 \geq 0.
$$
This means that
\begin{eqnarray}\label{eqn-md23}
\frac{p^{m+1}-2p-3}{2(p+1)}\geq p^{m-1}.
\end{eqnarray}
Therefore, from  Equations (\ref{eqn-md22}) and (\ref{eqn-md23}), we have
\begin{eqnarray}\label{eqn-md24}
N\geq(q+1)+p^{m-1}.
\end{eqnarray}
Further, by $2\ell+1< m/2$ and Equations (\ref{eqn-md21}) we have
\begin{eqnarray}\label{eqn-md5}
N\le (q+1)+p^{\ell}(p^{\ell}-1)p^{m/2}< (q+1)+p^{2\ell+m/2}< (q+1)+p^{m-1},
\end{eqnarray}
which is a contradiction to Equation (\ref{eqn-md24}). This means that there does not exist $(\alpha, \beta) \in \Stab(B_{\ell})$  with $(\alpha, \beta) \neq (1,0)$. Hence, $\mu=|\Stab(B_{\ell})| =|\{(1,0)\}|= 1$.

This completes the proof.
\end{proof}

\begin{lemma}\label{lem-qici}
Let $p\geq3$ and $p\equiv3~mod~4$. Let $m$ be odd and $\ell$ be a positive integer with $\gcd(\ell, m)=1$. Define the group
\begin{eqnarray}\label{eqn-Ag122}
\GA_1(\gf(q))=\{ux+v: (u, v) \in \gf(q)^* \times \gf(q), u \in \textup{QR} \}.
\end{eqnarray}
Then the group $\GA_1(\gf(q))$ is $2$-homogeneous on $\gf(q)$.
\end{lemma}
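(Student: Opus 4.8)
The plan is to verify the definition of $2$-homogeneity directly: I must show that $\GA_1(\gf(q))$ acts transitively on the set of unordered pairs of distinct elements of $\gf(q)$. Concretely, given two such pairs $\{a,b\}$ and $\{c,d\}$ with $a\neq b$ and $c\neq d$, I need to produce a map $g(x)=ux+v$ with $u\in\textup{QR}$ and $v\in\gf(q)$ satisfying $\{g(a),g(b)\}=\{c,d\}$. (It is also worth noting in passing that $\GA_1(\gf(q))$ really is a group, since a product of quadratic residues is a quadratic residue, so the transitivity statement is meaningful.)

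First I would record the two candidate multipliers. An affine map scales differences by $u$, namely $g(a)-g(b)=u(a-b)$, so matching $\{g(a),g(b)\}$ to $\{c,d\}$ forces one of two cases: either $g(a)=c,\ g(b)=d$, giving $u=(c-d)/(a-b)$, or $g(a)=d,\ g(b)=c$, giving $u=(d-c)/(a-b)=-(c-d)/(a-b)$. Writing $u_1=(c-d)/(a-b)$, the two options for the multiplier are thus $u_1$ and $-u_1$; in each case $v$ is uniquely determined (for instance $v=c-ua$ in the first case), and the resulting map lies in $\GA_1(\gf(q))$ precisely when its multiplier is a quadratic residue.

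The crux is therefore to show that exactly one of $u_1$ and $-u_1$ is a quadratic residue. This reduces to the claim that $-1$ is a \emph{non}-square in $\gf(q)$: once that is known, multiplication by $-1$ interchanges the residues and the non-residues, so precisely one of the pair $\{u_1,-u_1\}$ lies in $\textup{QR}$. To establish $-1\notin\textup{QR}$ I would invoke the standing hypotheses. Since $p\equiv 3\pmod 4$ and $m$ is odd, we get $q=p^m\equiv 3^m\equiv 3\pmod 4$, and $-1$ is a square in $\gf(q)$ if and only if $(-1)^{(q-1)/2}=1$, i.e.\ if and only if $q\equiv 1\pmod 4$. Hence $-1$ is a non-square under the present assumptions.

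Selecting whichever of $u_1,-u_1$ is the quadratic residue, together with the corresponding $v$, yields an element of $\GA_1(\gf(q))$ carrying $\{a,b\}$ to $\{c,d\}$, which establishes $2$-homogeneity. I do not expect a serious obstacle here: the only step that genuinely uses the arithmetic hypotheses $p\equiv 3\pmod 4$ and $m$ odd is the non-squareness of $-1$, and the remainder is a short direct computation in the affine group.
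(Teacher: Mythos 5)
Your proof is correct and follows essentially the same route as the paper: both arguments reduce to the two candidate multipliers $u_1$ and $-u_1$ obtained from matching the pair in the two possible orders, and both conclude that exactly one of them lies in $\textup{QR}$ because $-1$ is a non-residue when $p\equiv 3 \pmod 4$ and $m$ is odd. If anything, you are slightly more complete, since the paper asserts $-1\in\textup{NQR}$ ``by assumption'' while you justify it via $q\equiv 3\pmod 4$ and Euler's criterion.
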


\begin{proof}
Let $\{x_1,x_2\}\subseteq \gf(q)$ and $\{y_1,y_2\}\subseteq \gf(q)$
be any two 2-subsets of $\gf(q)$. Let
\begin{eqnarray*}
\left\{ \begin{array}{l}
u_1x_1+v_1=y_1    \\
u_1x_2+v_1=y_2
\end{array}
\right.
\mbox{or}~
\left\{ \begin{array}{l}
u_1x_1+v_1=y_2    \\
u_1x_2+v_1=y_1.
\end{array}
\right.
\end{eqnarray*}
Then we have
\begin{eqnarray}\label{eqn-qici2}
\left\{ \begin{array}{l}
u_1=(x_1-x_2)^{-1}(y_1-y_2)    \\
v_1=y_1-(x_1-x_2)^{-1}(y_1-y_2)x_1
\end{array}
\right.
\mbox{or}~
\left\{ \begin{array}{l}
u_1=(x_1-x_2)^{-1}(y_1-y_2)(-1)    \\
v_1=y_2-(x_1-x_2)^{-1}(y_1-y_2)(-1)x_1.
\end{array}
\right.
\end{eqnarray}
By assumption, we have $-1\in \textup{NQR}$. It then deduce that one is a quadratic residue and the other is a quadratic non-residue in the two values $(x_1-x_2)^{-1}(y_1-y_2)$ and $(x_1-x_2)^{-1}(y_1-y_2)(-1)$ of Equation (\ref{eqn-qici2}). This means that there exists $\sigma(x)=(ux+v)\in \GA_1(\gf(q))$ such that $\sigma(x)$ sends $\{x_1,x_2\}$ to $\{y_1,y_2\}$, where $u\in \textup{QR}$ is a quadratic residue and $v\in \gf(q)$.
The desired conclusion then follows from the definition of $2$-homogeneity.
\end{proof}
\begin{lemma}\label{lem-dengzhi}
With the symbols and notation above, let
$$A_1=\{B_{(f,b,c)}:(b, c) \in \gf(q)^*\times \gf(q)\},$$
$$A_2=\{uB_{\ell}+v:(u, v) \in \gf(q)^*\times \gf(q)\}$$
and
$$A_3=\{uB_{\ell}+v:(u, v) \in \gf(q)^*\times \gf(q)~and~u\in \textup{QR}\},$$
where $f(x)=x^{p^{\ell}+1}$ and $B_{\ell}$ was defined by (\ref{eqn-Je11}).
Then we have the following results.

(\uppercase\expandafter{\romannumeral1}) If $p=2$ and $\gcd(\ell, m)=1$, then $A_1=A_2$.

(\uppercase\expandafter{\romannumeral2}) If $p\geq3$, $p\equiv3~mod~4$, $m$ is odd with $\gcd(\ell, m)=1$, then $A_1=A_3$.
\end{lemma}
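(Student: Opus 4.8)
The plan is to reduce both identities to a single mechanism—a linear substitution in the variable of $f$ that turns each block $B_{(f,b,c)}$ with $b\neq0$ into a dilation-and-translation of $B_{\ell}$—and then to identify the set of dilation factors that actually occur.

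First I would record the substitution. Fix $(b,c)$ with $b\in\gf(q)^{*}$ and put $\lambda=b^{p^{m-\ell}}$, so that $\lambda^{p^{\ell}}=b^{p^{m}}=b$ and hence $b\lambda=\lambda^{p^{\ell}+1}$. Since $y\mapsto\lambda y$ is a permutation of $\gf(q)$, substituting $x=\lambda y$ in $B_{(f,b,c)}=\{x^{p^{\ell}+1}+bx+c:x\in\gf(q)\}$ gives
$$B_{(f,b,c)}=\{\lambda^{p^{\ell}+1}(y^{p^{\ell}+1}+y)+c:y\in\gf(q)\}=uB_{\ell}+c,\qquad u:=b^{1+p^{m-\ell}}.$$
This immediately yields $A_{1}\subseteq A_{2}$. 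For part (II) I would further note that $p$ is odd, so $1+p^{m-\ell}$ is even and $u=b^{1+p^{m-\ell}}$ is a square, i.e.\ $u\in\textup{QR}$; hence $A_{1}\subseteq A_{3}$.

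For the reverse inclusions it suffices to show that the map $b\mapsto u=b^{1+p^{m-\ell}}$ surjects onto the relevant target. Its image is the subgroup of $\delta$-th powers of $\gf(q)^{*}$, where $\delta=\gcd(1+p^{m-\ell},\,q-1)$; since $p^{m}\equiv1\pmod{q-1}$ we have $1+p^{m-\ell}\equiv p^{m-\ell}(p^{\ell}+1)$, and as $p^{m-\ell}$ is a unit modulo $q-1$ this gives $\delta=\gcd(p^{\ell}+1,\,p^{m}-1)$. The heart of the argument is then to compute $\delta$. In part (II), with $p\equiv3\pmod4$ and $m$ odd, I would show $\delta=2$: an order argument rules out any odd prime divisor (a common odd prime $r$ would force $\ord_{r}(p)\mid\gcd(2\ell,m)=\gcd(2,m)=1$ yet $\ord_{r}(p)\nmid\ell$, which is impossible), while both $p^{\ell}+1$ and $p^{m}-1$ are even with $v_{2}(p^{m}-1)=1$ because $p\equiv-1\pmod4$ and $m$ is odd. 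Thus the image of $b\mapsto b^{1+p^{m-\ell}}$ is exactly the set of squares $\textup{QR}$, so every $u\in\textup{QR}$ occurs and $A_{3}\subseteq A_{1}$, giving $A_{1}=A_{3}$. In part (I) the target is all of $\gf(q)^{*}$, so I would need $\delta=\gcd(2^{\ell}+1,2^{m}-1)=1$; the same order argument shows this holds exactly when $m$ is odd (and that $\delta=3$ when $m$ is even), after which $A_{2}\subseteq A_{1}$ and $A_{1}=A_{2}$.

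The only genuinely delicate step is this gcd computation, so I would present the order-of-$p$ argument carefully and make the roles of $\gcd(\ell,m)=1$, of $p\equiv3\pmod4$ (to pin down $v_{2}(p^{m}-1)=1$), and of the parity of $m$ fully explicit; in particular the equality $A_{1}=A_{2}$ in part (I) hinges on $m$ being odd, which I would verify and flag. Everything else—the change of variable and the identification of the image of a monomial map with the group of $\delta$-th powers—is routine, and no appeal to the stabilizer computations of Lemma \ref{lem-miu2} is needed for this statement.
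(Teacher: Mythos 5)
Your proposal follows the same outline as the paper's own proof: Equation (\ref{eqn-aa}) there is exactly your substitution (the paper replaces $b$ by $b^{p^{\ell}}$ and $x$ by $bx$, you set $x=\lambda y$ with $\lambda^{p^{\ell}}=b$), and both arguments then reduce the reverse inclusions to deciding which dilation factors $u$ lie in the image of the power map $b\mapsto b^{p^{\ell}+1}$ (equivalently $b\mapsto b^{1+p^{m-\ell}}$). Your execution, however, is more careful, and in part (I) this matters. The paper's reverse inclusion, Equation (\ref{eqn-aa2}), exponentiates by $(p^{\ell}+1)^{-1}$, an inverse modulo $q-1$ that exists if and only if $\gcd(2^{\ell}+1,2^{m}-1)=1$, i.e.\ if and only if $m$ is odd; the parenthetical justification ``since $p=2$, $(p^{\ell}+1)^{-1}=1$'' is not meaningful. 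Your gcd computation is the correct version of that step: the image of the power map is all of $\gf(q)^{*}$ exactly when $m$ is odd, while for even $m$ (where $\gcd(\ell,m)=1$ forces $\ell$ odd) it is the index-$3$ subgroup of cubes.

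The restriction to odd $m$ that you flag is therefore not a removable defect of your proof: part (I) as stated is false for even $m$. For instance, take $p=2$, $\ell=1$, $m=10$; the paper's own Lemma \ref{lem-miu2}(I) applies ($2\ell+2<m/2$) and gives $\Stab(B_{\ell})=\{(1,0)\}$, so the map $(u,v)\mapsto uB_{\ell}+v$ is injective, whence $|A_{2}|=q(q-1)$ while $|A_{1}|\le q(q-1)/3$ because only cubes occur as dilations; thus $A_{1}\subsetneq A_{2}$. In other words, Lemma \ref{lem-dengzhi}(I) and Lemma \ref{lem-miu2}(I) are mutually inconsistent for even $m$, and Theorem \ref{thm-main1}, whose hypotheses allow even $m$, inherits the problem (with at most $q(q-1)/3$ blocks the claimed $\lambda=k(k-1)$ cannot hold). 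In part (II) your proof is complete and matches the paper's intent: the paper asserts $\gcd(p^{\ell}+1,q-1)=2$ without proof and dismisses $A_{3}\subseteq A_{1}$ as ``similar'' to case (I) (where the cited argument is the invalid one), whereas your order argument ruling out odd common prime factors together with the $2$-adic valuation $v_{2}(p^{m}-1)=1$ for $p\equiv 3\pmod 4$, $m$ odd, is exactly the justification needed for both inclusions. In summary: same approach as the paper, but yours is the rigorous version, and it correctly exposes that part (I) requires $m$ to be odd.
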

\begin{proof}
For each $(b, c) \in \gf(q)^*\times \gf(q)$, we have
\begin{eqnarray}\label{eqn-aa}
f(x)+bx+c& =&x^{p^{\ell}+1}+bx+c          \nonumber \\
& =& x^{p^{\ell}+1}+b^{p^{\ell}}x+c~~~~~~~~~~~~~~~~~~~~~~~~~~    \mbox(b~is~replaced~with ~{b^{p^{\ell}}})        \nonumber      \\
& =&  (bx)^{p^{\ell}+1}+b^{p^{\ell}}(bx)+c~~~~~~~~~~~~~~~(x~is~replaced~with ~bx)                        \nonumber \\
& =&  b^{{p^{\ell}+1}}(x^{p^{\ell}+1}+x)+c. ~~~~~~~~~~~~~~~~~~~~~~~~~~~~~~~~~~~
\end{eqnarray}

(\uppercase\expandafter{\romannumeral1}) For any $B_{(f,b,c)}=\{f(x)+bx+c: x\in \gf(q)\}\in A_1$, from Equation (\ref{eqn-aa}) we have
$$B_{(f,b,c)}=\{f(x)+bx+c: x\in \gf(q)\}=\{b^{{p^{\ell}+1}}B_{\ell}+c : x\in \gf(q)\}\in A_2,$$
which means that $A_1\subseteq A_2$.  Next we prove $A_2\subseteq A_1$.

For each $(u, v) \in \gf(q)^*\times \gf(q)$, we have
\begin{eqnarray}\label{eqn-aa2}
u(x^{p^{\ell}+1}+x)+v& =&(u^{(p^{\ell}+1)^{-1}}x)^{p^{\ell}+1}+u^{1-(p^{\ell}+1)^{-1}}(u^{(p^{\ell}+1)^{-1}}x)+v          \nonumber \\
& =& (ux)^{p^{\ell}+1}+ux+v~~~~~~~~~~~~~~~~~~~~~~\mbox(since~p=2~,(p^{\ell}+1)^{-1}=1 )        \nonumber      \\
& =&  x^{p^{\ell}+1}+1\cdot x+v~~~~~~~~~~~~~~~~~~~~~~~~~~(ux~is~replaced~with ~x)
\end{eqnarray}
Hence, for any $uB_{\ell}+v\in A_2$, by Equation (\ref{eqn-aa2}) we have
$uB_{\ell}+v=B_{(f,1,v)}\in A_1.$
This means that $A_2\subseteq A_1$. The desired conclusion then follows.

(\uppercase\expandafter{\romannumeral2})
By definition, $\gcd(p^{\ell}+1,q-1)=2$. Thus, $b^{p^{\ell}+1}=(b^{(p^{\ell}+1)/2})^2\in \textup{QR}$ in Equation (\ref{eqn-aa}), which means that $A_1\subseteq A_3$.  The proof of $A_3\subseteq A_1$ is similar to the proof of $A_2\subseteq A_1$ of case (\uppercase\expandafter{\romannumeral1}) and we omit it here.
\end{proof}


\subsection{The proofs of Theorems \ref{thm-main1} and \ref{thm-main2}}\label{sec-sub2}

It is now time to show the results as stated in Theorems \ref{thm-main1} and \ref{thm-main2}.

\begin{proof}[Proof of Theorem \ref{thm-main1}]
Recall that $p=2$ and $f(x)=x^{p^\ell+1}=x^{2^\ell+1}$.
By definition, from Lemma \ref{lem-kk1}, it follows that
\begin{eqnarray}\label{eqn-kkkk1}
k=|B_{\ell}|=|\{x^{p^l+1}+x: x \in \gf(q)\}|=\frac{2q+(-1)^m}{3}.
\end{eqnarray}
Define the group
\begin{eqnarray}\label{eqn-Ag11}
\GA(\gf(q))=\{ux+v: (u, v) \in \gf(q)^* \times \gf(q)\}.
\end{eqnarray}
It is clear that $\GA(\gf(q))$ is the general affine group and its size is $|\GA(\gf(q)|=q(q-1)$.
The stabilizer of $B_{\ell}$ under $\AG(\gf(q))$ is defined by
$$
\Stab(B_{\ell})=\{ux+v: (u,v) \in \gf(q)^* \times \gf(q), uB_{\ell}+v=B_{\ell}\},
$$
where $B_{\ell}$ is defined by (\ref{eqn-Je11}).
We then deduce that
\begin{equation}\label{eqn-xx}
|\cB_{(f,k)}|=\frac{|\GA(\gf(q)|}{|\Stab(B_{\ell})|} =\frac{(q-1)q}{|\Stab(B_{\ell})|}=q(q-1)~~~~~~~~~~~
\end{equation}
by Lemma \ref{lem-miu2}.
This means that all blocks $B_{(f,b,c)}$
with $(b, c) \in \gf(q)^* \times \gf(q)$ are pairwise distinct. Note that $\GA(\gf(q))$ is $2$-homogeneous on $\gf(q)$. By definitions and the result (\uppercase\expandafter{\romannumeral1}) of Lemma \ref{lem-dengzhi}, the incidence structure $(\gf(q), \cB_{(f(x),k)})$ can be seen as $(\gf(q), \cB)$, which is constructed by the base block $B_{\ell}$ under the the action of $\GA(\gf(q))$, where
$$\cB= \{g B_{\ell} : g \in \GA(\gf(q))\}.$$
Further, from Theorem \ref{the-con}, it then follows that the incidence structure $\bD(f(x), k):=(\gf(q), \cB_{(f(x),k)})$ is a $2$-$(q, k,\lambda)$ design, where $k$ was defined by Equation (\ref{eqn-kkkk1}) and

\begin{equation}\label{eqn-x}
\lambda=
|\cB_{(f,k)}| \left(
\begin{aligned}
&k  \\
&2  \\
\end{aligned}
\right)
/
\left(
\begin{aligned}
&q\\
&2\\
\end{aligned}
\right)
=k(k-1),
\end{equation}
The proof is then completed.
\end{proof}

\begin{proof}[Proof of Theorem \ref{thm-main2}]
  The proof is similar to that of Theorem \ref{thm-main2}. By definition, from Lemma \ref{lem-kk1} we have
\begin{eqnarray}\label{eqn-kkkk2}
k=|B_{\ell}|=|\{x^{p^l+1}+x: x \in \gf(q)\}|=q-\frac{pq-1}{2(p+1)}.
\end{eqnarray}
Define the group
\begin{eqnarray}\label{eqn-Ag12}
\GA_1(\gf(q))=\{ux+v: (u, v) \in \gf(q)^* \times \gf(q), u \in QR \}.
\end{eqnarray}
It is clear that the size of the group  $\GA_1(\gf(q))$ is $|\GA_1(\gf(q)|=q(q-1)/2$.
The stabilizer of $B_{\ell}$ under $\AG_1(\gf(q))$ is defined by
$$
\Stab(B_{\ell})=\{ux+v: (u,v) \in \gf(q)^* \times \gf(q), u\in \textup{QR}, uB_{\ell}+v=B_{\ell}\},
$$
where $B_{\ell}$ was defined by (\ref{eqn-Je11}).
We then deduce that
\begin{equation}\label{eqn-xx}
|\cB_{(f,k)}|=\frac{|\GA_1(\gf(q)|}{|\Stab(B_{\ell})|} =q(q-1)/2~~~~~~~~~~~
\end{equation}
by Lemma \ref{lem-miu2}. By definitions and the result (\uppercase\expandafter{\romannumeral2}) of Lemma \ref{lem-dengzhi}, $(\gf(q), \cB_{(f(x),k)})$ can be seen as $(\gf(q), \cB)$ constructed by the base block $B_{\ell}$ under the the action of $\GA_1(\gf(q))$, where
$$\cB= \{g B_{\ell} : g \in \GA_1(\gf(q))\}.$$
From Theorem \ref{the-con} and Lemma \ref{lem-qici}, it then follows that the incidence structure $(\gf(q), \cB_{(f(x),k)})$ is a $2$-$(q, k, \lambda)$ design, where $k$ was defined by Equation (\ref{eqn-kkkk2}) and

\begin{equation}\label{eqn-x}
\lambda=
|\cB_{(f,k)}| \left(
\begin{aligned}
&k  \\
&2  \\
\end{aligned}
\right)
/
\left(
\begin{aligned}
&q\\
&2\\
\end{aligned}
\right)
=k(k-1)/2,
\end{equation}
The desired conclusion then follows.
\end{proof}

\section{Summary and concluding remarks}\label{sec-summary}

In this paper, based on the general constructions of $t$-designs from polynomials over $\gf(q)$ in \cite{DT192}, quadratic functions were used to
construct $t$-designs. It was shown that infinite families of $2$-designs were produced and their parameters were also explicitly determined. Furthermore, the results in this paper gave an affirmative answer to Conjecture $3$ in  Ding and Tang \cite{DT192} and generalized the result. We remark that this paper does not consider the case that $q$ is an odd prime power with $q\equiv 1 \pmod 4$, since Magma program shows that the corresponding
incidence structures are not $2$-designs.
To conclude this paper, we further presents the following two conjectures, which are the complements of the main results of this paper.
\begin{conj}
Let $p=2$ ,$\ell, m$ be two positive integers with $m\ge 3$, $\frac{m}{4}-1 \le \ell \le  m-1  $ and $\gcd(\ell, m)=1$. Let $f(x)=x^{2^{\ell}+1}$.
Then the incidence structure $\bD(f(x), k):=(\gf(q), \cB_{(f(x),k)})$ is a $2$-$(q, k, k(k-1))$ design, where $k=\frac{2q+(-1)^m}{3}$.
\end{conj}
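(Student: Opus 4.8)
The plan is to run the proof of Theorem~\ref{thm-main1} almost verbatim and to isolate the single ingredient that actually carries the upper restriction on $\ell$. Three of the four facts used there are insensitive to the size of $\ell$: the block size $k=|B_\ell|=\frac{2q+(-1)^m}{3}$ comes from Lemma~\ref{lem-kk1}; the $2$-homogeneity of $\GA(\gf(q))$ on $\gf(q)$ holds because the full affine group is $2$-transitive for every $q$; and the identification $A_1=A_2$ of Lemma~\ref{lem-dengzhi}(\uppercase\expandafter{\romannumeral1}) requires only $p=2$ and $\gcd(\ell,m)=1$. Hence, once $\mu=|\Stab(B_\ell)|=1$ is known, $|\cB_{(f,k)}|=|\GA(\gf(q))|/\mu=q(q-1)$, all blocks are distinct, and Theorem~\ref{the-con} delivers the $2$-$(q,k,k(k-1))$ design. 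So the conjecture is \emph{equivalent} to showing $\Stab(B_\ell)=\{(1,0)\}$ for $\frac{m}{4}-1\le\ell\le m-1$, and the defect of Lemma~\ref{lem-miu2} is exactly that its Hasse--Weil contradiction, $2^{\ell}(2^{\ell}-1)\sqrt q<2^{m-2}$, only closes when $\ell<\frac{m}{4}-1$.

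The first new idea I would use is a symmetry that collapses a large part of the range onto cases already settled. Writing $x^{2^\ell+1}+x=x(x+1)^{2^\ell}$ in characteristic $2$, and applying the bijections $x\mapsto x^{2^\ell}$ and $x\mapsto x+1$ of $\gf(q)$, one checks the set identity
\[
B_{m-\ell}=\{x^{2^{m-\ell}+1}+x:x\in\gf(q)\}=\{x^{2^\ell}(x+1):x\in\gf(q)\}=B_\ell .
\]
Thus $\Stab(B_\ell)=\Stab(B_{m-\ell})$; and since the lower bound $N\ge 2q-k$ of Lemma~\ref{lem-N11} depends only on the set identity $\alpha B_\ell+\beta=B_\ell$ and not on the exponent, I may re-run the argument of Lemma~\ref{lem-miu2} with the \emph{complementary} Gold curve $x^{2^{m-\ell}+1}+x-\alpha(y^{2^{m-\ell}+1}+y)-\beta$. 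Its genus involves $2^{m-\ell}(2^{m-\ell}-1)$, so the Hasse--Weil contradiction now succeeds whenever $m-\ell<\frac{m}{4}-1$, i.e. $\ell>\frac{3m}{4}+1$. Combined with Theorem~\ref{thm-main1}, this leaves only the central band $\frac{m}{4}-1\le\ell\le\frac{3m}{4}+1$ to be treated.

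For that central band both Gold exponents have genus too large for Hasse--Weil, so I would attack the set identity $\alpha B_\ell+\beta=B_\ell$ directly through the collision structure of $h(x)=x^{2^\ell+1}+x$. Its derivative $h(x+a)+h(x)=a x^{2^\ell}+a^{2^\ell}x+(a^{2^\ell+1}+a)$ has $\gf(2)$-linear part $ax^{2^\ell}+a^{2^\ell}x$ with kernel $\{0,a\}$, since $\gcd(2^\ell-1,q-1)=1$; together with Bluher's classification (Lemma~\ref{lem-N0}) this pins down the fibre-size distribution of $h$, hence a rigid partition of $B_\ell$ into values of prescribed multiplicity. An affine symmetry $z\mapsto\alpha z+\beta$ of $B_\ell$ must permute the blocks of this partition, and I would convert that constraint into a multiplier condition forcing $\alpha=1$ and then $\beta=0$. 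Alternatively one may read $\Stab(B_\ell)$ off the known affine-equivalence group of the Gold power function $x^{2^\ell+1}$, of which it is a sub-object.

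The hard part will be exactly this central-band stabilizer computation: the symmetry above removes the two outer ranges for free, but in the middle the soft Hasse--Weil cushion is gone, and a genuinely structural (rather than counting) argument is required that stays valid uniformly in $m$. The residual small-$m$ instances of the band, where the multiplicity argument may degenerate, I expect would have to be cleared by the Magma verification already employed after Corollary~\ref{coro-caix}.
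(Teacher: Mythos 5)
First, a point of status: the statement you were asked to prove is \emph{Conjecture 1} of the paper. The authors do not prove it; they prove Theorem~\ref{thm-main1} only for $\ell<\frac{m}{4}-1$ and explicitly leave the range $\frac{m}{4}-1\le\ell\le m-1$ open. So there is no paper proof to compare against, and the question is simply whether your argument closes the conjecture. It does not, but it contains one genuinely correct and valuable new ingredient. Your reduction of the conjecture to $\Stab(B_\ell)=\{(1,0)\}$ is sound: Lemma~\ref{lem-kk1}, the $2$-homogeneity of $\GA(\gf(q))$, and Lemma~\ref{lem-dengzhi}(I) are indeed insensitive to the size of $\ell$, so the stabilizer is the only obstruction. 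Your symmetry $B_{m-\ell}=B_\ell$ checks out: in characteristic $2$, $x^{2^\ell+1}+x=x(x+1)^{2^\ell}$, and the substitutions $x\mapsto x^{2^\ell}$ followed by $x\mapsto x+1$ turn $\{x(x+1)^{2^{m-\ell}}\}$ into $\{x(x+1)^{2^\ell}\}$. Since Lemma~\ref{lem-N11} needs only the set identity $\alpha B_\ell+\beta=B_\ell$ (equivalently $\alpha B_{m-\ell}+\beta=B_{m-\ell}$), re-running Lemma~\ref{lem-miu2} on the curve with exponent $m-\ell$ legitimately kills the stabilizer whenever $m-\ell<\frac{m}{4}-1$, i.e.\ $\ell>\frac{3m}{4}+1$. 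That is a real extension of Theorem~\ref{thm-main1}.

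The gap is the central band $\frac{m}{4}-1\le\ell\le\frac{3m}{4}+1$, and it is not a residual case: for every $m$ this band contains roughly half of all admissible exponents $\ell$, and it grows with $m$, so it cannot be ``cleared by Magma'' the way the finitely many cases after Corollary~\ref{coro-caix} were. For this band you offer only a sketch, and its pivotal step is unjustified and, as stated, wrong in principle: you claim that an affine map $z\mapsto\alpha z+\beta$ with $\alpha B_\ell+\beta=B_\ell$ ``must permute the blocks'' of the partition of $B_\ell$ by fibre size of $h(x)=x^{2^\ell+1}+x$. But the fibre-size partition is structure attached to the \emph{function} $h$, not to the \emph{set} $B_\ell$; a bijection of the image set has no a priori reason to respect multiplicities of $h$. (What the set identity does give you is exactly the counting inequality of Lemma~\ref{lem-N11} — the paper's route — and nothing finer without further argument.) So the step ``convert that constraint into a multiplier condition forcing $\alpha=1$ and then $\beta=0$'' is precisely the hard part of the conjecture, and it is asserted rather than proved; likewise, reading $\Stab(B_\ell)$ off the affine-equivalence group of the Gold function is proposed but not carried out. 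As it stands, your proposal proves the conjecture for $\ell>\frac{3m}{4}+1$ and leaves it open on the band, which is the bulk of the conjectured range.
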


\begin{conj}
Let $p$ be an odd prime with $p \equiv 3 \pmod 4$ and $m\ge 3$ be odd. Let $\ell$ be a positive integer with  $\frac{m-2}{4} \le \ell \le m-1 $ and $\gcd(\ell, m)=1$. Let
$f(x)=x^{p^{\ell}+1}$.  Then the incidence structure $\bD(f(x), k):=(\gf(q), \cB_{(f(x),k)})$ is a $2$-$(q, k, \frac{k(k-1)}{2})$ design, where $k=q-\frac{pq-1}{2(p+1)}$.
\end{conj}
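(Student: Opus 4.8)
The plan is to run the argument of Theorem~\ref{thm-main2} verbatim and to isolate the single place where its hypothesis $\ell<\frac{m-2}{4}$ is actually spent. Tracing that proof, the value $k=|B_{\ell}|=q-\frac{pq-1}{2(p+1)}$ is supplied by Lemma~\ref{lem-kk1}, which requires only $m$ odd and $\gcd(\ell,m)=1$; the identity $\cB_{(f,k)}=\{g B_{\ell}:g\in\GA_1(\gf(q))\}$ is supplied by part~(II) of Lemma~\ref{lem-dengzhi}, which requires only $\gcd(p^{\ell}+1,q-1)=2$; and the $2$-homogeneity of $\GA_1(\gf(q))$ on $\gf(q)$ is supplied by Lemma~\ref{lem-qici}, which requires only $-1\in\textup{NQR}$, i.e.\ $q\equiv 3\pmod 4$. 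None of these three ingredients imposes an upper bound on $\ell$. Hence, exactly as in Theorem~\ref{thm-main2}, applying Theorem~\ref{the-con} with base block $B_{\ell}$ and group $\GA_1(\gf(q))$ would yield a $2$-$(q,k,\lambda)$ design with
\[
\lambda=|\cB_{(f,k)}|\binom{k}{2}\Big/\binom{q}{2}=\frac{k(k-1)}{2},
\]
provided $|\cB_{(f,k)}|=|\GA_1(\gf(q))|=q(q-1)/2$.

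Therefore the entire content of the conjecture collapses to a single assertion: the blocks are pairwise distinct, equivalently the stabilizer
\[
\Stab(B_{\ell})=\{ux+v:(u,v)\in\gf(q)^{*}\times\gf(q),\ u\in\textup{QR},\ uB_{\ell}+v=B_{\ell}\}
\]
is trivial. Its triviality is exactly what part~(III) of Lemma~\ref{lem-miu2} supplies (in fact for the full affine group), and that is the only step where $\ell<\frac{m-2}{4}$ enters: a nontrivial pair $(\alpha,\beta)$ forces $|C_f(\gf(q))|\ge 2q-k$ by Lemma~\ref{lem-N11}, whereas the Hasse--Weil bound of Corollary~\ref{coro-N11} caps it by $(q+1)+p^{\ell}(p^{\ell}-1)\sqrt q$; the two clash only when $p^{\ell}(p^{\ell}-1)\sqrt q<p^{m-1}\le\frac{p^{m+1}-2p-3}{2(p+1)}$, which is precisely the regime $\ell<\frac{m-2}{4}$. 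For larger $\ell$ the genus $\frac{(p^{\ell}-1)(p^{\ell}-2)}{2}$ of $\hat C_f$ is too large, Theorem~\ref{thm:Hasse-Weil} is powerless, and a structural handle on $\Stab(B_{\ell})$ is required.

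My proposed route abandons the counting inequality and exploits the orbit structure of an affine symmetry directly: if $ux+v$ stabilizes $B_{\ell}$, then $B_{\ell}$ is a union of orbits of the cyclic group $\langle z\mapsto uz+v\rangle$. The translation case $u=1$, $v\ne 0$ can already be settled without Hasse--Weil, since every orbit of a nonzero translation has size $p$, so $p\mid k$ would follow, while a short computation gives $k\equiv\frac{p+1}{2}\not\equiv 0\pmod p$. The essential case is $u\in\textup{QR}\setminus\{1\}$: here $z\mapsto uz+v$ has a unique fixed point and all remaining orbits have common size $d=\ord(u)\mid\frac{q-1}{2}$, so invariance of $B_{\ell}$ would force $k\equiv 0$ or $1\pmod d$ together with the much stronger demand that $B_{\ell}$ be a genuine union of such orbits. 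The plan is to refute this by feeding in the exact fibre/collision structure of $g(x)=x^{p^{\ell}+1}+x$ that already underlies Corollary~\ref{coro-N2}, ideally through the closed-form evaluation of the Weil sums attached to the exponent $p^{\ell}+1$, which would replace the impotent Hasse--Weil estimate by an exact point count of $\hat C_f$ and thereby decide $uB_{\ell}+v=B_{\ell}$ outright.

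The main obstacle is exactly this last case. For small $\ell$ a crude cardinality contradiction suffices, but for large $\ell$ one must genuinely show that the value set of $x^{p^{\ell}+1}+x$ admits no nontrivial multiplicative symmetry, and the necessary congruences $k\equiv 0,1\pmod d$ fall well short of forcing that conclusion. It is even conceivable that for isolated $\ell$ with $\gcd(\ell,m)=1$ a genuine nontrivial multiplier persists, which would then have to be detected and excluded rather than eliminated uniformly; the apparent impossibility of carrying this out in closed form across the whole range $\frac{m-2}{4}\le\ell\le m-1$ is presumably why the statement is recorded only as a conjecture.
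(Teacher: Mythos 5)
You were asked to prove what the paper states as a \emph{conjecture}: the authors explicitly pose this as an open complement to Theorem~\ref{thm-main2} and give no proof of it, so there is no proof in the paper to compare yours against. Your diagnostic analysis is accurate and faithfully reflects how the paper's proof of Theorem~\ref{thm-main2} is organized: Lemma~\ref{lem-kk1}, part~(II) of Lemma~\ref{lem-dengzhi}, and Lemma~\ref{lem-qici} indeed impose no upper bound on $\ell$, and the hypothesis $\ell<\frac{m-2}{4}$ is spent solely in case~(III) of Lemma~\ref{lem-miu2}, where the lower bound $N\ge 2q-k$ from Lemma~\ref{lem-N11} must contradict the Hasse--Weil upper bound of Corollary~\ref{coro-N11}. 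Your reduction of the conjecture to the triviality of $\Stab(B_{\ell})$ is also correct: a nontrivial stabilizer would shrink $|\cB_{(f,k)}|$ below $q(q-1)/2$ and hence force $\lambda<\frac{k(k-1)}{2}$, so the design parameters claimed in the conjecture are equivalent to that triviality.

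However, what you have written is not a proof, and you concede as much. The translation subcase is sound: if $z\mapsto z+v$ with $v\neq 0$ stabilized $B_{\ell}$, then $B_{\ell}$ would be a union of orbits of size $p$, forcing $p\mid k$, whereas $k\equiv 2^{-1}\equiv\frac{p+1}{2}\pmod p$. But the essential subcase $u\in\textup{QR}\setminus\{1\}$ is left unresolved: the necessary congruences $k\equiv 0,1\pmod{\ord(u)}$ are, as you yourself note, far too weak to exclude a nontrivial multiplier, and the proposed replacement of Hasse--Weil by an ``exact point count of $\hat C_f$ via Weil sums'' is speculative --- no closed-form evaluation of $|C_f(\gf(q))|$ is available for general $(\alpha,\beta)$ and large $\ell$, and Bluher's theorem (Lemma~\ref{lem-N0}) controls only the fibre statistics of $x^{p^{\ell}+1}+x$, i.e.\ the size of $B_{\ell}$, not whether $B_{\ell}$ admits an affine symmetry. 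So the gap in your argument is precisely the content of the conjecture itself: what you have produced is a correct explanation of why the method of Theorem~\ref{thm-main2} cannot reach $\ell\ge\frac{m-2}{4}$, together with a partial (translation-only) stabilizer result, not a proof of the statement.
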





\section*{References}

\end{document}